\DeclareMathOperator{\sgn}{sgn}
\newtheorem{theorem}{Theorem}
\newtheorem{corollary}{Corollary}
\newtheorem{lemma}{Lemma}
\newtheorem{assumption}{Assumption}
\theoremstyle{definition}
\newtheorem{definition}{Definition}
\title{Efficient Computation of Confidence Sets Using Classification on Equidistributed Grids}
\author{Lujie Zhou\footnote{Department of Economics, Pennsylvania State University. lbz5158@psu.edu}}
\date{\normalsize First version: October 2023 \\
This version: November 2024}
\begin{document}
\maketitle

\begin{abstract}
Economic models produce moment inequalities, which can be used to form tests of the true parameters. Confidence sets (CS) of the true parameters are derived by inverting these tests. However, they often lack analytical expressions, necessitating a grid search to obtain the CS numerically by retaining the grid points that pass the test. When the statistic is not asymptotically pivotal, constructing the critical value for each grid point in the parameter space adds to the computational burden. In this paper, we convert the computational issue into a classification problem by using a support vector machine (SVM) classifier. Its decision function provides a faster and more systematic way of dividing the parameter space into two regions: inside vs. outside of the confidence set. We label those points in the CS as 1 and those outside as -1. Researchers can train the SVM classifier on a grid of manageable size and use it to determine whether points on denser grids are in the CS or not. We establish certain conditions for the grid so that there is a tuning that allows us to asymptotically reproduce the test in the CS. This means that in the limit, a point is classified as belonging to the confidence set \textit{if and only if} it is labeled as 1 by the SVM. 

\end{abstract}

\thispagestyle{fancy}

\section{Introduction}
\indent \par
The confidence regions are of central importance in a wide range of applied work spanning many fields of studies. While its computation is not the focus of in the context of economics and econometrics as much as its theoretical construction, to be able to construct the confidences with modern computing tools is a necessary component of the empirical economic analysis. 

We see that oftentimes structural modeling leads to a characterization of the confidence region as a set of parameter values that satisfy a certain inequality or a vector of inequalities (i.e. the criterion inequalities, or simply the test). To perform the pointwise testing when inverting the test to obtain an analytical expression of the confidence region for the parameter of interest is not feasible, one would break the parameter space, typically a subset that lies in some Euclidean space, into very dense grid points and evaluate the criterion to keep the points accepted by the test. This approach is known as the exhaustive searching. For example, if one is interested in the inference of a scalar-valued parameter, the easiest thing to do is to cut (maybe some segment of, based on the prior knowledge) the real line into, say, 10,000 points of equal gaps, followed by plugging those values into the criterion inequality to obtain the confidence interval(s) once we have the criterion inequality defined. Now imagine if we would like to apply the same idea to the computation of parameter vectors of higher dimensions. With two or three parameters, so $10,000^2$ and $10,000^3$ grid points, we would probably do just as well in that we can easily visualize the regions in which parameter vectors all satisfy the criterion inequality, provided that computation efficiency is not really a concern to economists. But we can already see that the computation complexity grows exponentially fast in the number of parameters of interest, and very quickly it will expand out of a manageable scale if one would like to maintain the same precision along each dimension. Especially when the test statistic is not asymptotically pivotal, as we will see in the setup in the next section, an additional computation of the cutoff value has to be done within the evaluation of every grid point.

A second issue that arises is the reporting of these confidence sets. Even though we have formed fine grids and obtained those regions after having the program run for hours or days, the researcher would have hard time describing those regions or present them simply because we cannot visualize objects beyond three dimensions. After all, the grid points along do not tell us about the shape, angle, or connectedness of the confidence regions. 

In this paper, we will explore the number-theoretic notion of equidistributed sequences and integrate it with a modern machine learning method, support vector machine (SVM), originally proposed in \cite{CV-1995}, to address the computation challenge described above. The essence of this approach is, by turning the computation problem into a classification problem, we build a classifier that is fast to evaluate and mimics the decision behavior of the test using a moderately-sized grid, and apply it to a much larger grid to get fast results, hence accelerating the computation of dealing with the gigantic number of grid points while still sustaining the coverage of the underlying the confidence regions. Additionally, this approach is also compatible with parallel computing in that we can parallelize the computation of both grid evaluation and the fitting of the SVM classifier (see, e.g. \cite{svm_parallel}), should this situation become relevant. The main contribution of this paper includes introducing the general notion of equidistributed sequences to the problems of grid search; most importantly, we convert the problem into a classification task and show that it is possible to tune the SVM classifier, which is fast and easy to use once trained on a pre-specified grid, such that it exactly preserves the asymptotic coverage of the confidence sets on the equidistributed grids. Although this paper does not aim to solve the issue of reporting the confidence sets, we hope to venture a different angle around this issue via the use of the decision function associated with the SVM classifier. One can conservatively report the smallest box that contains all grid points predicted by SVM classifier to be in the confidence set, which is characterized by 2 times the dimension of the space numbers, i.e. maximum and minimum along each dimension. Alternatively, we propose reporting the decision function after training the classifier, which is concise, very easy to work with, and conserves all the information in the classifier. 

The rest of this paper proceeds as follows. The first prat of Section \ref{lit_review} introduces the framework where a computational approach becomes necessary when obtaining confidence sets. The second part then provides a discussion of some work that involves moment inequality confidence sets alongside the methods used in the econometrics literature. Section \ref{methods} describe the idea of support vector machine classifier and its asymptotic behavior, followed by the conditions and tuning that ensure the preservation of coverage, which are the main contribution of this paper. Section \ref{simulation} provides simulation results in graphs to help visually illustrate the performance of SVM classifiers. Section \ref{conclusion} concludes. Details of the equidistributed sequences and all proofs can be found respectively in Appendix \ref{grid} and Appendix \ref{proofs}.

\section{Setup and Related Literature}\label{lit_review}
\subsection{Confidence Sets}
The general setup follows the framework provided in \cite{Rosen2008} that describes the construction of the confidence regions from models based on moment inequalities. And as is mentioned in the paper, similar construction is also seen in a number of other papers, such as \cite{Wolak1991}, \cite{Ciliberto_Tamer}, \cite{ALG-2016}, \cite{Li_Henry}, etc. In particular, Rosen defines the set to be all parameter values $\theta\in\Theta$, the parameter space, at which some criterion value is less than a threshold value, $Z_{1-\alpha}$, corresponding to the significance level $\alpha$, and it is given as follows. 
\begin{equation}\label{eq1}
C R_{1-\alpha} \equiv\left\{\theta \in \Theta: n \hat{Q}_{n}(\theta) \leq Z_{1-\alpha}\right\},
\end{equation}
where the parameter-dependent component of the criterion is given by
\begin{equation}
\hat{Q}_{n}(\theta)=\min _{t \geq 0}\left[\hat{E}_{n}[m(y, x, \theta)]-t\right]^{\prime} \hat{V}_{\theta}^{-1}\left[\hat{E}_{n}[m(y, x, \theta)]-t\right],
\end{equation}
and $n\hat{Q}_n(\theta)$ typically follows some chi-bar-square distribution, a mixture of chi-square distributions. Here in this Wald-type statistic, $m(\cdot)$ denotes the finite-dimensional vector of moment conditions which the economic model predicts to be non-negative, and $\hat{V}_\theta$ is the sample variance of $m(y,x,\theta)$ with $x$ and $y$ being the covariate and outcome variables. The idea of this statistics is that if $\theta$ belongs to the identified region of the model, $\Theta^*$, then this quantity should be very small. The cutoff value is computed in a way such that 
\begin{equation}
\inf _{\theta \in \Theta^{*}} \lim _{n \rightarrow \infty} \operatorname{Pr}\left\{n \hat{Q}_{n}(\theta) \leq Z_{1-\alpha}\right\} \geq 1-\alpha
\end{equation}
The computational complication arises for the models where the test statistic is not asymptotically pivotal. Specifically, the asymptotic distribution of the test statistic may depend on those moments with expected value $0$. This is resolved by taking upper bounds of the number of such moment conditions and construct conservative confidence sets for $\theta_0$. While this is a very clear setup in theory, the specific steps of computation given the precision of the confidence region and the parameter set to consider are less obvious. The quote below from \cite{Rosen2008} discusses the computation of the confidence sets, and shows the sense in which the choices can be ad hoc.
\begin{displayquote}
\textit{Appropriate choice of grid values $G$ depends on the particular application. How fine the grid should be depends on the desired level of precision for $C R_{1-\alpha}$. If $\Theta^{*}$ is known to be sufficiently regular (e.g. closed and convex), certain values of $\theta$ may be able to be included or discarded without explicitly evaluating $n \hat{Q}_{n}(\theta)$.}
\end{displayquote}
Our approach on the computation is relatively more straightforward and less case dependent compared to \cite{Rosen2008}, so long as a concrete and meaningful definition of the critical region can be written down both on paper and in computer code.

\subsection{Literature Review}
In a recent work, \cite{KKLS-2019} employ a slight variation of the standard grid search; that is, they limit the searching to certain regions of the real line (the parameter space in their case). This limited grid search scheme begins with, say, $\hat{\theta}^U+0.01$ where $\hat{\theta}^U$ is the estimated upper bound of the identified set for the true parameter $\theta$, proceeds to $\hat{\theta}^U+0.02$ if some null hypothesis is not rejected, and stops when it is rejected. Similar procedures are also applied to the estimated lower bound. This guarantees that all points beyond the stopping values, $\theta^l$ and $\theta^u$, are ruled out, hence the asymptotically uniformly valid $1-\alpha$ confidence set for $\theta$ is given by the interval $[\theta^l,\theta^u]$. They also propose that the stochastic search can be used to find good directions for the parameter of interest in the higher dimensional setting. This algorithm adds perturbation during each iteration of the searching. A risk that the authors recognize is that there is a real risk for the random algorithm to exit the identified set, without exploiting the specific structure of the problems. However, it is not immediately clear how to extend this method to parameters in higher dimensional spaces because, as is mentioned above, putting together the confidence intervals along each individual dimension does not give us the correct coverage probability (e.g. the ``rectangle'' versus the ``ellipse''). On the contrary, our approach handles higher dimensions especially well with the use of SVM because the hyperplane it draws lives in a transformed space and is not constructed in a successive manner.

The family of intelligent algorithms can be useful when solving for global maximum or minimum, such as genetic algorithm, particle swarm optimization algorithm, artificial fish swarm algorithm, artificial bee colony algorithm, and firefly algorithm. The wolfpack algorithm introduced in \cite{WZ-2014}, for example, first obtain the minimizer of as in equation (\ref{eq1}), and then perform a grid search near the global minimizer. The obvious advantage is the massive parallelizability in that we can multi-process the same algorithm to visit various regions of the space via different initializations. Nonetheless, there is no guarantee that all disconnected regions of the confidence set will be visited since one starts grid searching only around the global extremum after discarding other information along the search of the extremum, and one certainly does not get around the issue of performing a grid search even after nailing the global minimum. Working equally well with this idea is another family of algorithms, the derivative-free optimization (DFO) algorithms. In comparison, our new approach does not hinge on the fact that a global extremum is found first as a guiding direction; more importantly, through the use of the equidistributed sequences, we have control over the way we visit different regions of the parameter space. We can just as easily parallelize the computation, too. The algorithm below describes the detailed steps of the intelligence algorithm, where ``wolves'' refer to the nodes that run in a parallel manner, and the ``lead wolf'' is defined to be the node with the best objective value in the current iteration. 

\begin{algorithm}
\begin{algorithmic}[1]
\STATE Initialize the algorithm parameters: positions/number of wolves, maximum iteration, etc.
\STATE All but the lead wolf search until a higher objective function value is attained (then replace the lead wolf) or maximum iteration is reached.
\STATE All wolves move toward the lead wolf. If a new lead wolf emerges, go to step $2$; otherwise keep moving until they are ``close'' to the lead wolf. 
\STATE All but the lead wolf take besieging behavior in the vicinity of the lead wolf.
\STATE Update the leader when appropriate. And update the wolf population to open up new search space.
\STATE Stop if one stopping condition is met; otherwise, go to step $2$.
\end{algorithmic}
\caption{\cite{WZ-2014}: Wolf Pack Algorithm}
\label{alg}
\end{algorithm}

Moment inequalities arise in many different scenarios, such as relaxing model assumptions, missing data, models with multiple predictions or solutions, etc. \cite{manski_tamer} represent a particular missing data issue in terms of moment inequalities. Specifically, when upper and lower bounds of outcome data are observed rather than the actual outcome, a set of conditional moment inequalities can be derived that must hold with probability 1. Large amount of papers have focused on the models with partially identified parameters. In the case where the univariate parameter of interest is not point-identified in the model, \cite{imbens_manski} provides a way of constructing confidence intervals, for the particular parameter rather than the entire identified region, that has the correct pre-specified coverage probability. \cite{CHT2007} and \cite{Romano_Shaikh} were the earliest attempts who separately develop inference procedures for both the multi-dimensional set of identified parameter vectors and the true parameter vector of interest, both of which rely on subsampling methods that may be computationally difficult. The recent work by \cite{Li_Henry} proposes a framework with incomplete models in which the confidence set obtained by inverting the test displays exact finite sample coverage. Notably, the test statistic and critical value are computed through a discrete optimal transport problem, and they also propose a fast preliminary search methodology that locates the region in which the true parameter vector lies, thus alleviating the cumbersome search over the entire parameter space. 

There is a literature on sub-vector inference for moment inequalities when the parameters of interest form only a sub-vector aside from the nuisance parameters. In particular, if the conditional moment inequalities satisfies linearity conditions, one can exploit this structure and efficiently compute the profiled maximum statistic as in \cite{ARP2019} or the profiled quasi-likelihood ratio statistic proposed in \cite{cox_shi_2022}, regardless of the dimension of the nuisance parameters.

\cite{ALG-2016} develop a structural model of chain-story entry with discrete intensity and obtain the asymptotic properties of the pivotal test statistic via kernel-based estimation, which is then used to construct the confidence set. A grid search is then applied in the parameter space to determine numerically the region of acceptance via the criterion defined by the authors. \cite{Ciliberto_Tamer} study the effect of general firm heterogeneity on the market structure in the context of U.S. airline industry, and propose a moment inequalities-based procedure to conduct inference on the parameters of interest. They employ simulated annealing algorithm to approximate the global minimum of the objective criterion along with subsampling to obtain the critical value (as described in \cite{CHT2007}) in the construction of the confidence region. We discuss more details of their approach in Subsection \ref{simulation_cht}.

Finally, equidistributed sequences are no strangers to the studies of computation in economics. The computational methods textbook, \cite{Judd1998}, introduced this concept in the context of numerical integration. It is noted that equidistributed sequences display ex post uniformity, which is a property that yields asymptotically valid approximation to any integral. \cite{Judd1998} also discusses that the precision is lost as more points along the same sequences are generated because they are sensitive to the round-off errors due to the machine precision. Additionally, \cite{miranda_fackler} builds on the aforementioned formulation for Monte Carlo integration and offers a function that computes the equidistributed sequences along with their weights in the well-known \textit{CompEcon} toolbox. They point out that certain such sequences achieve uniformity in both an ex ante and an ex post senses, hence better representing the regions under investigation than truly random sequences.

\section{Methodologies}\label{methods}
In this section, we will first describe the mechanism of the support vector machine classifiers. Then, we introduce the asymptotic property of the SVM classifiers when the tuning parameters take certain values. And lastly, we provide conditions based on the equidistributed sequences which guarantee that our procedure replicates the behavior of the limiting confidence sets. Our complete approach would be to first generate a moderately-sized set of grid points (or more, depending on the researcher's time budget) using one of the equidistributed sequences, evaluate these points using the criterion inequalities, assign binary labels to these grid points (i.e. 1 if in the confidence region; -1 otherwise), and fit a SVM classifier to these labeled grid points. The tuning of the SVM parameters will always be such that the SVM classifier predicts all training examples correctly. As a result, we end up with a trained SVM classifier on the grid points generated with the equidistributed sequences, with which we can then assign binary labels to more grid points that are outside of the grid we use to train the classifier. This approach is much more efficient compared to exhaustively evaluating the grids, because training the SVM is a one-time investment and its computation is relatively negligible compared to evaluating the grid points in most applications. So in the extreme case where we evaluate just as many grid points as exhaustive search, SVM gives us a systematic way of replicating the confidence set and extrapolating the prediction to larger grids via its decision function. Most importantly, it allows us to re-produce the confidence sets with precision in the limit if the grids meet the very mild conditions described below. 

Other classification methods could work too in this setting, but there are other drawbacks. For example, convolutional neural networks are also excellent at approximating functions but the theoretical result puts restrictions on the type of functions it can perfectly approximate, such as continuity on a box in $\mathbb{R}^d$ (see \cite{universal_approx}). Another example is the nearest-neighbor classifiers, which inefficiently computes the distance between a new point and all training points although it requires no fitting. Note that logistic regression does not apply because this is a non-linear problem in general.

\subsection{Support Vector Machine Classifiers}\label{svm}
Support vector machines, which is of core importance to this paper, are ideal methods for classification tasks. In a simple case of points with binary labels (hence defining two groups) along the real line, SVM finds the boundary point such that it best separates the two groups; mathematically, it finds the point that minimizes the incorrect labels on both sides of the point. When there are disconnected groups, this problem becomes non-linear. The kernel method is applied then to transform the problem into a linear problem in other spaces. Essentially, we transform the data via a kernel function with tuning parameter such that the data becomes linearly separable in the new space. To exemplify its working, let's consider a 2-dimensional case where one group is surrounded by another group, as in the left panel of Figure \ref{fig1} below. Clearly, one cannot draw a single straight line to separate the two groups. Kernel method adds a third dimension, altitude, to the data, and as we see in the right panel of Figure \ref{fig1}, it is straightforward to draw a plane (i.e. a line equivalent in 3-dimensional space) that completely separates the two groups. In fact, there are a few possible planes one can draw to accomplish the same result. SVM finds the one that maximizes the minimum distance to both groups. Now if we project this plane back to the original data space, we would end up with a closed curve resembling that one would draw to separate the groups by hand. This is called the \textit{decision boundary}. We will explore more of this in Section \ref{simulation}

\begin{figure}[H]
    \centering
    \includegraphics[scale=0.55]{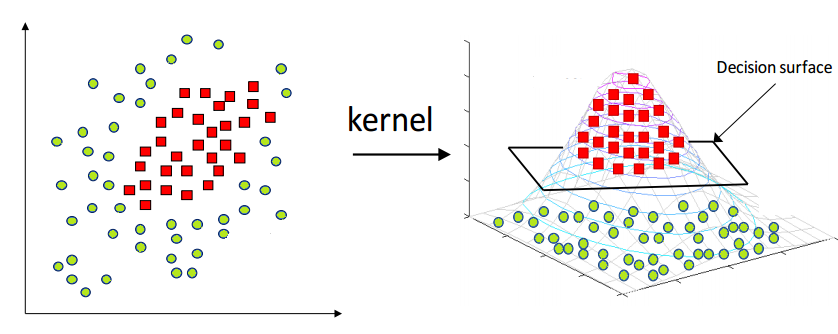}
    \caption{The mechanism of SVM classifiers in a simple 2D example. (\href{https://medium.com/analytics-vidhya/how-to-classify-non-linear-data-to-linear-data-bb2df1a6b781}{\textit{Source}})}
    \label{fig1}
\end{figure}

Since in reality, the kernel method only runs the computations \textit{as if} the data are in the higher dimensional space without actually transforming them, this improves the efficiency and makes non-linear classification possible computationally. Aside from the fast computation, SVM classifiers also have great performances separating the groups very consistently. And for the same reason, it is very suitable when our object lives in highers dimensions. Again, as is discussed earlier, the appealing visual representation is limited to spaces of dimensions no more than three. But it does allow us to have an alternative nonparametric and computational characterization of the very same confidence regions defined by our economic models. Future research along this line is appropriate to exploit the value of the decision boundaries created by the SVM classifiers.

To fully understand the way SVM works and discuss the ways in which we can adapt it to our applications, we will look into the mathematics in the background next. Once we equip the mechanism with computation tools, we should be able to relax the standard approach and modify it with our own criteria. The following parts build the support vector network progressively in three steps, after which a brief discussion of the computational implementation is provided; and at last, we will see where the adaptation is appropriate to extend it in the economic models. For our purposes, it suffices to discuss the cases with two classes labeled $1$ and $-1$, hence a binary classification problem.

\vspace{0.3cm}
\noindent\textit{Optimal Hyperplane} 

Let us first consider the case where the data are linearly separable, i.e. can be separated perfectly with a straight line or its equivalent in higher dimensions. Denote the features $\mathbf{x}_i\in\mathbb{R}^d$ and the labels $y_i\in\{-1,1\}$, for $i=1,\cdots,S$. Formally, we say the classes are \textit{linearly separable} if there exist a vector $\mathbf{w}\in\mathbb{R}^d$ and a scalar $b$ such that 
\begin{equation}\label{eq4}
\begin{aligned}
    &\mathbf{w}\cdot \mathbf{x}_i+b\geq1,\text{ if }y_i=1\\
    &\mathbf{w}\cdot \mathbf{x}_i+b\leq-1,\text{ if }y_i=-1.
\end{aligned}
\end{equation}
Define the \textit{margin} to be the smallest distance from one class to the other, and the \textit{decision boundary} is used to separate the classes. In Figure \ref{fig:margin} below, an example with $2$-dimensional features is presented. One class is labelled as circles and the other is crosses. The margin is given by the perpendicular distance from the highlighted circle to the highlighted crosses. And the decision boundary is the dotted line in between the two solid lines.
\begin{figure}[H]
    \centering
    \includegraphics[scale=0.5]{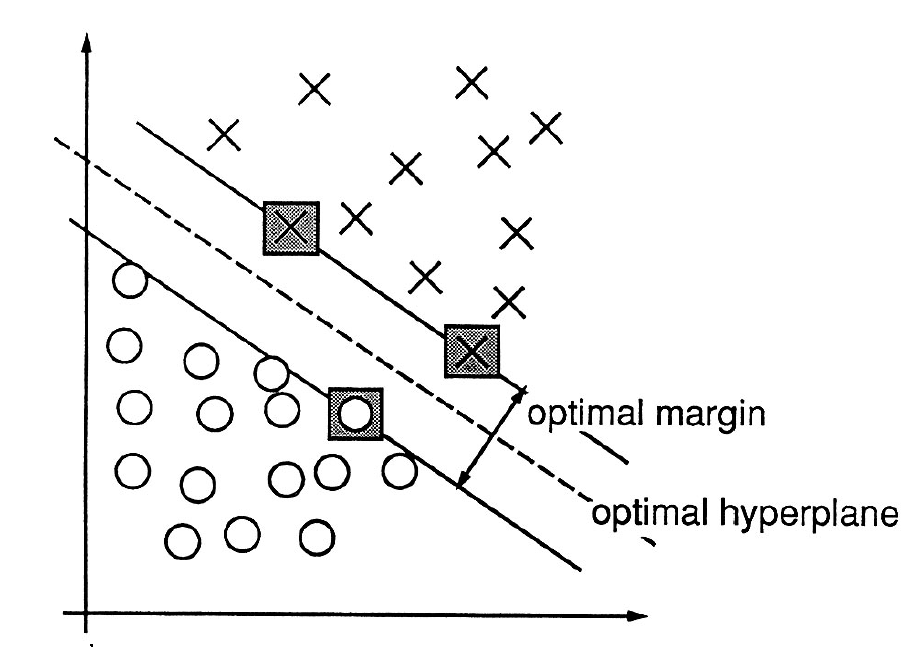}
    \caption{Decision boundary and margin.}
    \label{fig:margin}
\end{figure}
Having this in mind, we proceed to define the following concepts. Since there are infinitely many lines one can draw, in the above example, to separate the two classes, we take the line that yields the maximal margin to be our \textit{optimal hyperplane}, and call the largest margin the \textit{optimal margin}. In other words, optimal hyperplane maximizes its distance to the nearest data point of either class. Mathematically, we write equation (\ref{eq4}) as, for $i=1,\cdots,S$,
\begin{equation}\label{eq5}
    y_i(\mathbf{w}\cdot \mathbf{x}_i+b)\geq1,
\end{equation}
and the optimal hyperplane
\begin{equation}
    \mathbf{w}_0\cdot \mathbf{x}+b_0=0,
\end{equation}
which uniquely separates the two classes and maximizes the margin, i.e.
\begin{equation}
    (\mathbf{w}_0,b_0)=\arg\max\left\{ \min_{\{\mathbf{x}:y=1\}}\frac{\mathbf{x}\cdot\mathbf{w}}{||\mathbf{w}||}-\max_{\{\mathbf{x}:y=-1\}}\frac{\mathbf{x}\cdot\mathbf{w}}{||\mathbf{w}||} \right\}
\end{equation}
By equation (\ref{eq5}), the maximal margin is given by $\frac{2}{||\mathbf{w}_0||}=\frac{2}{\sqrt{\mathbf{w}_0\cdot\mathbf{w}_0}}$. So, constructing the optimal hyperplane is a quadratic programming problem that minimizes $\mathbf{w}\cdot\mathbf{w}$ under constraint (\ref{eq5}). Equivalently, we have the following convex minimization problem
$$\min\,\frac{1}{2}||\mathbf{w}||^2\quad\text{subject to}\quad y_i(\mathbf{w}\cdot\mathbf{x}_i+b)\geq1, \forall i.$$
The Lagrangian is given by 
\begin{equation}\label{eq8}
    L(\mathbf{w},b,\bm{\alpha})=\frac{1}{2}||\mathbf{w}||^2-\sum_{i=1}^n \alpha_i(y_i(\mathbf{w}\cdot\mathbf{x}_i+b)-1)
\end{equation}
Importantly, the Lagrange multiplier $\alpha_i=0$ for all data point $i$ that satisfies the constraint (\ref{eq5}) with strict inequality, by Karush-Kuhn-Tucker Theorem. Furthermore, the KKT first-order conditions are as follows
\begin{equation}\label{eq9}
\begin{aligned}
\mathbf{w}&=\sum_i\alpha_i y_i\mathbf{x}_i\\
0&=\sum_i\alpha_i y_i
\end{aligned}.
\end{equation}
Put together the above, we have that the optimal hyperplane is only determined by the points that satisfy constraint (\ref{eq5}) with equality, i.e. the \textit{support vectors}. Plugging equations (\ref{eq9}) into the Lagrangian (\ref{eq8}), we have the Lagrangian dual below that only operates over values of $\alpha_i$ under constraint (\ref{eq5}).
\begin{equation}\label{eq10}
    L(\bm{\alpha})=\sum_i\alpha_i-\frac{1}{2}\sum_i\sum_j\alpha_i\alpha_j y_i y_j (\mathbf{x}_i\cdot\mathbf{x}_j)
\end{equation}
Moreover, denote the maximizer $\bm{\alpha}^0$, the separating hyperplane is given by
\begin{equation}\label{eq13}
    0=\sum_i\alpha_i^0 y_i (\mathbf{x}_i\cdot\mathbf{x}) + b_0
\end{equation}
Note that $\mathbf{w}_0$ can be solved from top equation in (\ref{eq9}), and $b_0$ is typically solved using the average of all points such that (\ref{eq5}) holds with equality for numerical stability (\cite{Elements}). Once the solution is obtained, we take the sign of the function $f(\mathbf{x}^\dagger)=\mathbf{w}_0\cdot\mathbf{x}^\dagger+b_0$ as the classification of an new vector $\mathbf{x}^\dagger$ with unknown label. We will see next that the dot product terms in equations (\ref{eq10}) and (\ref{eq13}) are replaced with some kernel function to extend this idea to nonlinear classes.

\vspace{0.3cm}
\noindent\textit{Soft Margin Classifier}

To make our technique even more flexible and handle cases where classes overlap, we will allow for some tolerance of error. Define $\xi_i\geq0$ to be the tolerance associated with classifying each observation incorrectly. Our objective becomes to minimize the errors while finding the optimal hyperplane. This implies that we can combine the old problem with the new component into the following formulation.
\begin{align}
&\min \frac{1}{2}\mathbf{w}^2+C\cdot F\left(\sum_i \xi_i\right) \nonumber\\
\text{subject to}\quad & y_i(\mathbf{w}\cdot\mathbf{x}_i+b)\geq 1-\xi_i,\quad i=1,\cdots,S \\
&\xi_i\geq0,\quad i=1,\cdots,S
\end{align}
Here, the coefficient $C$ is a tuning parameter which determines the trade-off between maximizing the margin and making sure that all features $\mathbf{x}_i$ lie on the correct sides of the margin. Function $F$ is a monotone convex function that takes form of some loss function, such as \textit{hinge loss} given by $\max\{0,1-y_i(\mathbf{w}\cdot\mathbf{x}_i+b)\}$. If we think of the first component of this objective as the $L_2$ regularization term (or penalty term), then SVM can also be viewed as penalized regression under the hinge loss function. The above problem is known as the \textit{primal problem}. The classical approach to computing the soft margin classifier is to solve the \textit{Lagrangian dual} of the above problem and turn it into a quadratic programming problem. There are more recent approaches such as sub-gradient descent and coordinate descent developed for this problem, whose details I have not gotten into yet. 
Note it can be showed that the solution $\mathbf{w}_0$, same as the optimal hyperplane problem, can be written as a linear combination of the support vectors, i.e. 
\begin{equation}
    \mathbf{w}_0=\sum_i \alpha_i^0 y_i\mathbf{x}_i
\end{equation}
where $\bm{\alpha}^0$ is solved from the dual problem.

\vspace{0.3cm}
\noindent\textit{Support Vector Machine}

Suppose we have a non-linear problem. For some given transformation $\phi:\mathbb{R}^d\to\mathbb{R}^p$, which we know/choose a priori, that maps from the original input space to some feature space, we could separate the classes linearly in the feature space very easily. As discussed above, a coherent way to extend the above method is to replace the $\mathbf{x}_i$ with $\phi(\mathbf{x}_i)$, thus the dot products $\mathbf{x}_i\cdot\mathbf{x}_j$ replaced by $K(\mathbf{x}_i,\mathbf{x}_j)\equiv\langle\phi(\mathbf{x}_i),\phi(\mathbf{x}_j)\rangle$. We call this generalized inner product, $K$, \textit{kernel function}. More specifically, we would like to learn some normal vector $\mathbf{w}\in\mathbb{R}^p$ and scalar $b$, and according to the above formulation, we have 
\begin{equation}
    \mathbf{w}=\sum_i \alpha_i y_i\phi(\mathbf{x}_i),
\end{equation}
which then implies
\begin{equation}
\begin{aligned}
    f(\mathbf{x})&=\mathbf{w}\cdot\phi(\mathbf{x})+b \\
    &=\sum_i \alpha_i y_i \phi(\mathbf{x}_i)\cdot\phi(\mathbf{x})+b \\
    &=\sum_i \alpha_i y_i K(\mathbf{x}_i,\mathbf{x})+b.
\end{aligned}
\end{equation}
While the computation takes place in the transformed feature space now, the steps to compute the classifier (i.e. the support vectors and the weights $\alpha_i$) follow exactly from before. We require this kernel $K$ to be a symmetric, positive semi-definite function so that Mercer's theorem applies. This is because only kernels that satisfy Mercer's condition can be decomposed into inner products of some underlying map $\phi$ (i.e. such $\phi$ exists). This is known as the ``kernel method'', and it allows us to work with potentially infinite dimensional transformation of the input data without having to explicitly transform them. Some common choices of the kernel function include the polynomial kernel of degree $d$\footnote{Derivation: \href{https://en.wikipedia.org/wiki/Polynomial_kernel}{here}}, which is given by 
\begin{equation}
    K(\mathbf{u},\mathbf{v})=(\mathbf{u}\cdot\mathbf{v}+1)^d;
\end{equation}
and radial basis function kernel (Gaussian)\footnote{Derivation: \href{https://en.wikipedia.org/wiki/Radial_basis_function_kernel}{here}}
\begin{equation}
    K(\mathbf{u},\mathbf{v})=\exp\left(-\frac{||\mathbf{u}-\mathbf{v}||^2}{2\sigma^2}\right)
\end{equation}
where $\sigma>0$ is a free parameter.

The implementation of the SVM has been well developed and packaged in a number of machine learning toolkits, including LIBSVM, Matlab, scikit-learn in Python, etc. For the simulation exercises in section \ref{simulation} below, I used the ``e1071''\footnote{See \href{https://cran.r-project.org/web/packages/e1071/e1071.pdf}{https://cran.r-project.org/web/packages/e1071/e1071.pdf}.} library in R for convenience. When implementing our own variant of the SVM, we can make use of the source code of some of the these packages.

\subsection{Asymptotic Behavior}
Because of some properties of the radial basis function (RBF) kernel, we should pair our method with this choice of kernel. The most desirable property is that we can achieve 100\% perfect classification accuracy on the training data with the RBF kernel, provided that no two points of different classes lie on top of each other. \cite{KL-2003} show that for a fixed parameter value $C$ that is sufficiently large, so long as we make the $\sigma$ parameter really small (e.g. much smaller than the smallest distance between any two data points), we have that for any $i$, the predicted label is $\hat{y}(x_i)=y_i$. The classifier classifies the training example correctly in an asymptotic sense of $\sigma$,  given a fixed $C$. Formally, we have the following result. 

\begin{assumption}\label{ass1}
\begin{enumerate}
    \item[]
    \item $l_0>l_1+1>2$ where $l_0$ and $l_1$ are the number of training examples in class 0 and class 1, respectively.
    \item For $i\neq j, x_i\neq x_j$. That is, no two examples have identical $x$ vectors.
\end{enumerate}
\end{assumption}

\begin{lemma}\label{perfect_fit}
Under Assumption \ref{ass1}, if $C>l_0/(l_0+l_1)$, then, as $\sigma^2\to0$, the classifier function 
$$f(\mathbf{x})=\mathbf{w}_0\cdot\phi(\mathbf{x})+b_0$$ 
where $\phi(\cdot)$ is implicitly defined through the RBF kernel, i.e. $\phi(\mathbf{x})\cdot\phi(\mathbf{x}')=\exp\left(-\frac{||\mathbf{x}-\mathbf{x}'||^2}{2\sigma^2}\right)$, classifies all training examples correctly and classifies the rest of the space as class 0.
\end{lemma}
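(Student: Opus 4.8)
The plan is to analyze the SVM solution in the limit $\sigma^2\to 0$ by exploiting the fact that the RBF Gram matrix degenerates to the identity. Writing $K_{ij}=\exp(-\|\mathbf{x}_i-\mathbf{x}_j\|^2/2\sigma^2)$, Assumption \ref{ass1}.2 guarantees $\|\mathbf{x}_i-\mathbf{x}_j\|>0$ for $i\neq j$, so every off-diagonal entry tends to $0$ while the diagonal is identically $1$; hence $K\to I$. Since the Gaussian Gram matrix is positive definite for distinct points, the dual objective $L(\bm{\alpha})=\sum_i\alpha_i-\tfrac12\sum_{i,j}\alpha_i\alpha_j y_iy_jK_{ij}$ is strictly concave on the feasible set $\{0\le\alpha_i\le C,\ \sum_i\alpha_i y_i=0\}$, so the maximizer $\bm{\alpha}(\sigma)$ is unique and, by continuity of the solution map of a strictly concave QP in its data, converges to the unique maximizer $\bm{\alpha}^\ast$ of the limiting problem with $K=I$. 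First I would solve this limiting problem.

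With $K=I$ the dual decouples into $\sum_i\alpha_i-\tfrac12\sum_i\alpha_i^2$ (using $y_i^2=1$), so introducing a multiplier $\beta$ for $\sum_i\alpha_i y_i=0$, stationarity gives $\alpha_i^\ast=\mathrm{clip}(1-\beta y_i,0,C)$, i.e. $1-\beta$ for class $1$ (label $+1$) and $1+\beta$ for class $0$ (label $-1$), each truncated to the box. The balance equation $l_1\,\mathrm{clip}(1-\beta,0,C)=l_0\,\mathrm{clip}(1+\beta,0,C)$ then pins down $\beta$, and because $l_0>l_1$ the only admissible solution has $\beta<0$; a short case split on whether $C\le 2l_0/(l_0+l_1)$ gives either $\beta=l_1C/l_0-1$ or $\beta=-(l_0-l_1)/(l_0+l_1)$, both strictly negative. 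The key identification step is that $\beta$ equals the limiting bias $b_0$: for any class that remains an interior (margin) support vector, complementary slackness forces $y_i f(\mathbf{x}_i)=1$, and since $f(\mathbf{x}_i)\to\alpha_i^\ast y_i+b_0$ as $K\to I$, matching with $\alpha_i^\ast=1-b_0 y_i$ yields $b_0=\beta$. The imbalance condition $l_0>l_1+1>2$ is what keeps $b_0$ strictly negative and both classes populated by enough points to pin $b_0$ down unambiguously.

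The two conclusions then follow by evaluating the limiting decision function, using that all cross terms $\exp(-\|\mathbf{x}_i-\mathbf{x}_k\|^2/2\sigma^2)\to 0$. At a majority point (class $0$) this gives $f(\mathbf{x}_k)\to-\,\mathrm{clip}(1+b_0,0,C)+b_0=-(1+b_0)+b_0=-1<0$, since that $\alpha^\ast$ stays interior. At a minority point (class $1$) it gives $f(\mathbf{x}_k)\to\mathrm{clip}(1-b_0,0,C)+b_0$, which in the regime where this $\alpha^\ast$ saturates at the upper bound $C$ becomes $C+b_0$ and is positive precisely when $C>l_0/(l_0+l_1)$---this is exactly where the hypothesis on $C$ is consumed, the binding case being the minority class at $\alpha^\ast=C$. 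Thus every training example is labelled correctly for all sufficiently small $\sigma$. For any fixed $\mathbf{x}$ distinct from all training points, $\delta_{\mathbf{x}}:=\min_i\|\mathbf{x}-\mathbf{x}_i\|>0$ gives $|f(\mathbf{x})-b|\le SC\exp(-\delta_{\mathbf{x}}^2/2\sigma^2)\to 0$, so $f(\mathbf{x})\to b_0<0$ and $\mathbf{x}$ is assigned to class $0$, recovering the claim that the rest of the space is classified as class $0$.

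The step I expect to be the main obstacle is the rigorous passage to the limit for the bias $b_0$ together with the uniformity needed for the ``rest of the space'' statement. Unlike the $\alpha_i$, the intercept is not a dual variable but is recovered from complementary slackness, so I must argue that (i) the solution map $\sigma\mapsto(\bm{\alpha}(\sigma),b(\sigma))$ is continuous up to the degenerate limit and that margin support vectors persist so that $b(\sigma)\to b_0$, and (ii) although $f(\mathbf{x})\to b_0$ is only pointwise (the required $\sigma$ shrinks as $\mathbf{x}$ approaches a training point), this suffices for the stated limiting classification. Controlling these two points, rather than the essentially algebraic identification of $\beta$, is where the care is needed.
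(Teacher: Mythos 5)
Your proposal is correct, but note that the paper does not actually prove Lemma \ref{perfect_fit}: it imports the result from \cite{KL-2003} and offers only an informal illustration in the text (setting all $\alpha_i^0=1$, dropping the bias, and observing $\hat y(\mathbf{x}_j)\approx\sgn(\alpha_j^0y_j)$). Your argument reconstructs, essentially correctly, the Keerthi--Lin analysis that the paper leaves to the citation: you pass to the limiting dual with $K\to I$, solve the decoupled box-constrained QP to get $\alpha_i^\ast=\mathrm{clip}(1-\beta y_i,0,C)$ with $\beta<0$ forced by $l_0>l_1$, identify $\beta$ with the limiting bias via the margin support vectors, and then read off the three sign conditions. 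This buys something the paper's sketch does not: it shows exactly where the hypothesis $C>l_0/(l_0+l_1)$ is consumed (the clipped minority-class value $C+b_0=C(l_0+l_1)/l_0-1>0$) and why the ambient space defaults to class $0$ (because $b_0=\beta<0$). The two delicate points you flag do go through: since the limiting class-$0$ multipliers equal $1+\beta=l_1C/l_0$ (or $2l_1/(l_0+l_1)$), which lie strictly inside $(0,C)$ under Assumption \ref{ass1}, convergence of the unique dual maximizer keeps the majority-class points strictly interior for all small $\sigma$, so they persist as margin support vectors and pin down $b(\sigma)\to\beta$; and the pointwise (non-uniform) convergence $f(\mathbf{x})\to b_0<0$ off the training set is exactly the sense in which the lemma's second claim is meant, since for each fixed $\sigma>0$ a shrinking ball around each class-$1$ point is necessarily labelled $1$. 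The only cosmetic quibble is that your argument uses only $l_0>l_1$ and $l_1\ge 2$ rather than the full strength of $l_0>l_1+1>2$, which is harmless here.
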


As a consequence of the Lemma, we will always pair our SVM classifier with the RBF kernel. Throughout this paper we will assume Assumption \ref{ass1} holds and that the tuning parameter $C$ satisfies the condition given in Lemma \ref{perfect_fit}, hence the tuning will be focused on the parameter $\sigma^2$.  Albeit, the above result creates an issue of overfitting, which extensively undermines the out-of-sample predictive power. However, as far as our problem is concerned, this becomes an ideal attribute that allows us to capture the confidence regions. To illustrate the idea of this result, let us consider the following special case of the classification function 
\begin{equation}
f(\mathbf{x})=\sum_i \alpha_i y_i K(\mathbf{x}_i,\mathbf{x}),
\end{equation}
where the kernel is given by 
$$K(\mathbf{x}_i,\mathbf{x})=\exp\left(-\frac{||\mathbf{x}_i-\mathbf{x}||^2}{2\sigma^2}\right).$$ 
When we make $\sigma$ really small, it follows that $K(\mathbf{x}_i,\mathbf{x})\approx0$; except when $\mathbf{x}=\mathbf{x}_i$, $K(\mathbf{x}_i,\mathbf{x}_i)=1$. Therefore, the predicted label of the $j$th training point is 
$$\hat{y}(\mathbf{x}_j) = \sgn\left(\alpha_j^0 y_j K(\mathbf{x}_j,\mathbf{x}_j)+\sum_{i\neq j}\alpha_i^0 y_i K(\mathbf{x}_i,\mathbf{x}_j)\right)\approx \sgn(\alpha_j^0 y_j)$$
Now if we simplify $\alpha_i^0$ to $1$s, then clearly $\hat{y}(\mathbf{x}_j)$ would have the same sign as $y_j$. The immediate implication for our purposes is that as we increase the number of training points $n$ in our setup, since we can always completely classify the two groups (in and outside the confidence set) by taking a sequence of scalars $\{\sigma_n\}_{n\in\mathbb{N}}$ with $\sigma_n\to0$ as $n\to\infty$, we can perfectly recover the true underlying confidence set. We will formally state this result in the next subsection. Figure \ref{fig:overfit} shows the issue of overfitting when the $\sigma$ parameter value is too small, which puts too much weight on correctly classifying the training example. Whereas this can be mitigated usually via cross validation in practice, we will take full advantage of this behavior to obtain the confidence sets of interest. 
\begin{figure}
    \centering
    \includegraphics[scale=0.55]{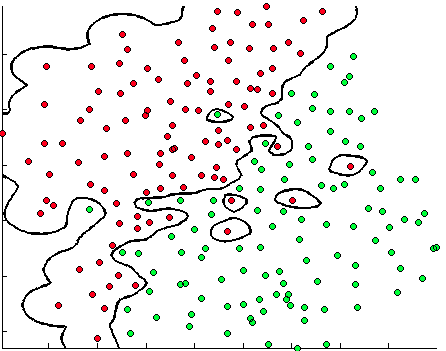}
    \caption{Overfitting issue with the Gaussian RBF kernel. (\href{http://openclassroom.stanford.edu/MainFolder/DocumentPage.php?course=MachineLearning&doc=exercises/ex8/ex8.html}{\textit{Source}})}
    \label{fig:overfit}
\end{figure}

Furthermore, RBF can be interpreted as if it maps a vector in the the original data space to an infinite-dimensional feature space where a separating hyperplane that perfectly divides the two groups is guaranteed to exist, although we never need to explicitly transform the data into the infinite-dimensional features due to the kernel method. To see this, let $\mathbf{u}$ and $\mathbf{v}$ be any two vectors, setting $2\sigma^2=1$, we have
\begin{align*}
K(\mathbf{u},\mathbf{v}) &= \exp(-||\mathbf{u}-\mathbf{v}||^2)\\
&=\exp(-\langle \mathbf{u}-\mathbf{v},\mathbf{u}-\mathbf{v}\rangle)\\
&=\exp(-||\mathbf{u}||^2-||\mathbf{v}||^2+2\mathbf{u}\cdot\mathbf{v})\\
&=\exp(-||\mathbf{u}||^2-||\mathbf{v}||^2)\sum_{n=0}^\infty \frac{(2\mathbf{u}\cdot\mathbf{v})^n}{n!},
\end{align*}
by Maclaurin series. Therefore, denoting $M=\exp(-||\mathbf{u}||^2-||\mathbf{v}||^2)$, we can define the transformation function from
$$\phi(\mathbf{u})\cdot\phi(\mathbf{v})=M\,\sum_{n=0}^\infty\frac{(2\mathbf{u}\cdot\mathbf{v})^n}{n!}.$$
In the case where $u$ and $v$ are scalars, we can write it out as follows 
$$\phi(u)=\sqrt{M}\cdot \bigg\langle 1, \,\sqrt{\frac{2}{1!}}\, u,\,\sqrt{\frac{2^2}{2!}}\,u^2,\,\sqrt{\frac{2^3}{3!}}\,u^3,\,\cdots \bigg\rangle.$$

\subsection{Coverage}
With the above theoretical guarantee, we now show that the desired coverage can be achieved by the SVM via proper tuning of the tuning parameters, along with proper design of the grid in the absence of further knowledge about the true paremeter. In this subsection, we provide the conditions that the grid and the values of the tuning parameters need to satisfy in order to preserve the asymptotic coverage of the underlying confidence sets. Moreover, when we tune the SVM classifier to behave exactly like the confidence set as we will show below, it not only maintains the asymptotic size, but also inherits the power in that any point outside the confidence set will be be labelled $-1$ by the post-tuning SVM classifier. We will look at different cases when sample size $n$ goes to infinity and when it is some finite number. As the sample size grows, we would like to generate increasingly more grid points as well over which we evaluate the test, and feed all grid points to train the SVM classifier. However, this must happen in a way that sufficiently explores the parameter space. The reason is that if the grid points are clustered in some regions, then we cannot precisely learn the boundary of the confidence sets in the limit even as the total number of grid points tends to infinity. More formally, we state in the following definitions the ways in which the grid points should be formed in the absence of any knowledge of the parameter of interest. 

\begin{definition}
    $\Gamma\subset \mathbb{R}^d$ is a $d$-dimensional \textit{box} if it is given by $\Gamma=\prod_{i=1}^d [a_i,b_i]$, where $a_i<b_i$, for all $i=1, \cdots, d$. 
\end{definition}

\begin{definition}\label{def:explore}
A grid $S$ \textit{fully explores} the space of interest $\Theta$ if it is the first $|S|$ terms of an equidistributed sequence on $\Gamma(\Theta)$, the smallest box in $\mathbb{R}^d$ that contains $\Theta$.
\end{definition}
\noindent
We borrow this number-theoretic notion of equidistribution to generalize the formation of grids and ensure that any point in the space can be approached arbitrarily closely by our grid points, an essential component in the theoretical result below. A uniform grid is formed from evenly-spaced equidistributed sequences. Another example of the equidistributed sequences is the Monte Carlo sequences, which is nothing other than putting together independent random sequences along each dimension of the parameter space. We leave the full details and further discussion of the equidistributed sequences along with more examples in Appendix \ref{grid}. We acknowledge that if prior knowledge of the true parameter is available, it would be more efficient to avoid visiting all regions of the parameter space and concentrate the grid points in regions that align with the prior information.

We will focus on the classifier with corresponding decision function given by 
\begin{equation}\label{decision_func}
f(\theta)=\sgn\left(\sum_i \alpha_i^0 y_i K(s_i,\theta)+b_0\right),
\end{equation}
setting the non-zero $\alpha_i^0$'s to 1 and $b_0=0$ for simplicity, so only the effect of the kernel is present. This can be extended to include the optimal values $\alpha_i^0$ and $b_0$ without qualitatively changing the following results. Below is another useful definition in finding the sufficient condition of preserving the asymptotic coverage of the confidence sets.

\begin{definition}
The influence of a grid point $s$ on another point 
$\theta$ \textit{dominates} the influence of all other grid points if $k(s,\theta)>\sum_{s_j\neq s}k(s_j,\theta)$, where $k(\cdot,\theta)$ measures the similarity between a point and $\theta$.
\end{definition}

\subsubsection{Large Sample}
First, we provide the conditions as the sample size goes to infinity. Given the decision function in (\ref{decision_func}), we see that any point $\theta\in CS_n$ will be predicted to have label $1$ by the SVM classifier if the nearest grid point has label $1$ and its influence dominates the influence of all grid points with label $-1$, using Gaussian RBF kernel as the measure of similarity. Therefore, we require the following to be true for the grid.

\begin{assumption}\label{ass:grid}
Grid $S_n$ fully explores the parameter space $\Theta$, for each $n$, based on the same equidistributed sequence. That is, $S_n\subset S_{n+1}$.
\end{assumption}
\noindent
As an immediate result of Assumption \ref{ass:grid}, we have the following lemma which states that eventually the closest grid point to any $\theta$ in the confidence set will be a grid point with label $1$; whereas the distance from any grid point outside the confidence set is bounded away from 0, for any sample size. Given a particular $\theta$, we denote the closest grid point with label $1$, i.e. interior grid point, as $I_n(\theta)$, and the closest grid point with label $-1$ as $E_n(\theta)$.

\begin{lemma}\label{lemma2}
Under Assumption \ref{ass:grid}, for any $\theta\in CS\equiv \underset{n\to\infty}{\lim}CS_n$, $||I_n(\theta)-\theta||\to0$ and $||E_n(\theta)-\theta||\to M>0$ as $n\to\infty$, where $M$ is some fixed number and $||\cdot||$ is the Euclidean norm.
\end{lemma}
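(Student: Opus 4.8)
The plan is to prove the two claims separately, in both cases leaning on the single structural fact that an equidistributed sequence on $\Gamma(\Theta)$ is \emph{dense} in $\Gamma(\Theta)$: by equidistribution every ball of positive volume meeting $\Gamma(\Theta)$ captures a fraction of the first $N$ terms proportional to its relative volume, so it contains infinitely many terms of the sequence. Under Assumption \ref{ass:grid} the grids are the nested initial segments $S_n$ of one fixed such sequence, so as $n\to\infty$ the active grid $S_n$ itself becomes dense in $\Gamma(\Theta)$. Throughout I would take $\theta$ to be an interior point of $CS$, which is precisely what is needed for the constant $M$ to be strictly positive; a boundary point of $CS$ would force $M=0$, so the statement is implicitly about $\theta\in\Int(CS)$. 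I would also use $CS=\lim_n CS_n$ in the operative sense that interior points of $CS$ eventually lie in $CS_n$ while points strictly outside $CS$ eventually lie in $CS_n^c$ (setwise convergence inherited from consistency of the test), which is what transfers the geometry of $CS$ onto the labels carried by the grid for large $n$.

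For the first claim, fix $\varepsilon>0$. Since $\theta\in\Int(CS)$ there is $\delta>0$ with $B(\theta,\delta)\subset CS$, and by the set convergence there is $N$ with $B(\theta,\delta/2)\subset CS_n$ for all $n\ge N$. Every point of $S_n$ lying in this ball is then in $CS_n$ and hence carries label $1$. Because $S_n$ becomes dense in $\Gamma(\Theta)$, for $n$ large the ball $B\big(\theta,\min(\varepsilon,\delta/2)\big)$ contains at least one grid point, which necessarily has label $1$ and is therefore a candidate for $I_n(\theta)$; thus $\|I_n(\theta)-\theta\|<\varepsilon$. As $\varepsilon>0$ was arbitrary, $\|I_n(\theta)-\theta\|\to0$.

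For the second claim, set $M:=\operatorname{dist}(\theta,\Gamma(\Theta)\setminus CS)$, the distance from $\theta$ to the exterior region, which is strictly positive precisely because $\theta\in\Int(CS)$. The lower bound comes from the same nesting argument: for any $\eta>0$ the closed ball $\overline{B(\theta,M-\eta)}$ is a compact subset of $\Int(CS)$, so for $n$ large it lies in $CS_n$ and contains no label-$(-1)$ point, giving $\|E_n(\theta)-\theta\|\ge M-\eta$ eventually and hence $\liminf_n\|E_n(\theta)-\theta\|\ge M$. For the matching upper bound I would pick $q\in\Gamma(\Theta)\setminus CS$ with $\|q-\theta\|\le M+\eta$; for $n$ large $q\in CS_n^c$, and by density of $S_n$ there is a grid point within $\eta$ of $q$, which then carries label $-1$ and sits within $M+2\eta$ of $\theta$. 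Thus $\limsup_n\|E_n(\theta)-\theta\|\le M$, and the two bounds together give $\|E_n(\theta)-\theta\|\to M$.

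The delicate point, and the one I would spell out most carefully, is the simultaneous management of the two limits hidden in the single index $n$: the grid is refining \emph{and} the empirical set $CS_n$ is drifting toward $CS$ at the same time, so every ``for $n$ large'' above must be made to hold jointly. This is why I isolate the two ingredients — a purely geometric density statement about the fixed equidistributed sequence, and a purely statistical convergence statement $CS_n\to CS$ — and only combine them after each has been quantified. A secondary obstacle is pinning down the mode of convergence $CS_n\to CS$ and the regularity of $CS$ needed for the exterior to be approachable and for $M$ to be well defined; I would state the required setwise (equivalently Hausdorff, on the compact box) convergence as the operative hypothesis, note that it follows from consistency of the test underlying $CS_n$, and restrict attention to $\theta\in\Int(CS)$ so that the positive constant $M$ is guaranteed.
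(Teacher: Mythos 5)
Your proposal is correct and rests on the same two ingredients as the paper's own proof: the density of the equidistributed sequence in the box (the paper's Lemma \ref{lemma3}) and the interiority of $\theta$ in $CS$, which supplies the positive radius that keeps exterior grid points away. The differences are in how carefully the two delicate points are handled, and in both cases your version is the more complete one. First, the drift of $CS_n$: the paper simply stipulates that the growing grid is labeled by the fixed limiting test $CS$ ``to avoid the issue of evolving boundaries,'' which dissolves the joint-limit problem by assumption; you instead keep the labels tied to $CS_n$ and reconcile the two limits by quantifying a setwise convergence $CS_n\to CS$ before combining it with density. Your route needs an extra hypothesis (the mode of convergence of $CS_n$) that the lemma statement does not supply, but it is the honest reading of the notation $I_n$, $E_n$, which the paper defines via labels on $S_n$. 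Second, and more substantively, the $E_n$ claim: the paper's proof only shows $\|E_n(\theta)-\theta\|$ is bounded below by the radius $\bar{\epsilon}$ of an interior ball and then declares $M=\bar{\epsilon}$, which establishes $\liminf_n\|E_n(\theta)-\theta\|>0$ but not convergence to a fixed limit as the statement asserts. You identify the limit as $M=\operatorname{dist}(\theta,\Gamma(\Theta)\setminus CS)$ and prove actual convergence with matching $\liminf$ and $\limsup$ bounds, the upper bound again using density to place a label-$(-1)$ grid point near a nearest exterior point. So your argument proves the lemma as literally stated, whereas the paper proves only the weaker (and, for Theorem \ref{thm1}, sufficient) lower-bound version. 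You are also right to flag that the statement is implicitly restricted to $\theta\in\Int(CS)$; the paper's proof makes the same restriction silently when it asserts the existence of the interior ball.
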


Therefore, we have the first main result of this paper as follows.

\begin{theorem} \label{thm1}
Suppose Assumption \ref{ass:grid} holds. Then, for any $\alpha\in(0,1)$, there exists tuning such that $\mathbb{P}(\theta_0\in SVM_+)\geq 1-\alpha$, where $\theta_0\in SVM_+$ means SVM classifier predicts $\theta_0$ to have label $1$.
\end{theorem}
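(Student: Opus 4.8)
The plan is to reduce the probabilistic coverage statement to a deterministic classification statement that holds on the event that the true parameter lies in the limiting confidence set. Concretely, I would first establish the eventual (large-$n$) inclusion $\{\theta_0\in CS\}\subseteq\{\theta_0\in SVM_+\}$ for an appropriately chosen bandwidth sequence $\sigma_n$. Since the confidence set is constructed to satisfy the coverage requirement of equation (3), i.e. $\mathbb{P}(\theta_0\in CS)\geq 1-\alpha$ in the limit, establishing this inclusion immediately yields the claim $\mathbb{P}(\theta_0\in SVM_+)\geq 1-\alpha$.

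To prove the inclusion I would work with the simplified decision function (\ref{decision_func}), namely $f(\theta)=\sgn\left(\sum_{i:y_i=1}K(s_i,\theta)-\sum_{j:y_j=-1}K(s_j,\theta)\right)$ with the RBF kernel. A sufficient condition for $f(\theta_0)=1$ is that the single nearest interior grid point dominates the entire exterior mass, $K(I_n(\theta_0),\theta_0)>\sum_{j:y_j=-1}K(s_j,\theta_0)$, since then the interior sum already exceeds the exterior sum. Writing $d_n^I=||I_n(\theta_0)-\theta_0||$ and $d_n^E=||E_n(\theta_0)-\theta_0||$, every exterior point sits at distance at least $d_n^E$ from $\theta_0$ and there are at most $N_n=|S_n|$ of them, so I would bound the exterior sum by $N_n\exp\!\left(-(d_n^E)^2/(2\sigma_n^2)\right)$. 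Domination is then implied by
$$\frac{(d_n^E)^2-(d_n^I)^2}{2\sigma_n^2}>\log N_n.$$

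Here Lemma \ref{lemma2} is exactly the input that makes this tractable: on $\{\theta_0\in CS\}$ we have $d_n^I\to 0$ and $d_n^E\to M>0$, so the numerator converges to $M^2>0$ and is eventually bounded below by $M^2/2$. Hence any deterministic schedule with $\sigma_n\to 0$ and $\sigma_n^2\log N_n\to 0$ (for instance $\sigma_n^2=o(1/\log N_n)$, compatible with the $\sigma^2\to 0$ regime of Lemma \ref{perfect_fit}) forces the displayed inequality for all large $n$, giving $f(\theta_0)=1$. Thus on $\{\theta_0\in CS\}$ the indicator $\mathbf{1}\{\theta_0\in SVM_+\}$ converges to $1$, and Fatou's lemma delivers $\liminf_n\mathbb{P}(\theta_0\in SVM_+)\geq\mathbb{P}(\theta_0\in CS)\geq 1-\alpha$.

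The main obstacle I anticipate is the interaction between the random confidence set and the necessarily deterministic tuning: the margin $M=\text{dist}(\theta_0,\partial CS)$ is itself random across data realizations, and for realizations in which $\theta_0$ falls near the boundary $M$ can be small, so the index after which domination holds varies with the realization. The argument above sidesteps this by choosing the bandwidth condition $\sigma_n^2\log N_n\to 0$ independently of $M$, so domination eventually holds for every realization with $M>0$; the remaining delicate point is the boundary event $\{\theta_0\in\partial CS\}$ where $M=0$, which I would dispose of either by appealing to the strict positivity asserted in Lemma \ref{lemma2} for all $\theta\in CS$ or by showing this event is asymptotically negligible under continuity of the limiting test statistic. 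A secondary technicality is justifying that reinstating the optimal $(\alpha_i^0,b_0)$ in place of $(1,0)$ does not overturn the sign, which I would handle with the same domination bound, since the optimal weights are bounded and $b_0$ is swamped by the interior kernel value as $\sigma_n\to 0$.
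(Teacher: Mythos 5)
Your proposal is correct and follows essentially the same route as the paper's proof: reduce coverage to the deterministic inclusion $\{\theta_0\in CS\}\subseteq\{\theta_0\in SVM_+\}$, enforce it by making the nearest interior grid point's RBF influence dominate the summed exterior influence via the bound $2\sigma_n^2<\bigl(\|E_n(\theta_0)-\theta_0\|^2-\|I_n(\theta_0)-\theta_0\|^2\bigr)/\log N_n$, and invoke Lemma \ref{lemma2} to guarantee the numerator is eventually positive. Your added care about using a deterministic bandwidth schedule $\sigma_n^2\log N_n\to 0$ (rather than a $\theta$- and realization-dependent one) and the Fatou step is a welcome refinement but does not change the substance of the argument.
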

\noindent
This tells us that the SVM classifier can be tuned to preserve the asymptotic coverage of the confidence set of interest, so we know using SVM classifier in place of the test with which we evaluate the parameter values would be correct in the limit. Furthermore, as Lemma \ref{perfect_fit} dictates that SVM classifiers can achieve perfect classification on training examples, we have that it is possible tune the SVM classifier in a way that it completely re-produces the classification behavior of the test from which we construct the confidence set. Formally, we state this result in the following corollary.

\begin{corollary}\label{corollary1}
Under the above assumptions, given any confidence set $CS\subset\Theta$, there exists tuning of the SVM classifier such that a point $\theta\in CS$ if and only if $\theta\in SVM_+$.
\end{corollary}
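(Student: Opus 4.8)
The plan is to prove the set equality $SVM_+ = CS$ by analyzing the sign of the decision function in (\ref{decision_func}) under the stated simplification $\alpha_i^0=1$, $b_0=0$, so that for any $\theta$,
$$f(\theta)=\sgn\left(\sum_{i:\,y_i=1}K(s_i,\theta)-\sum_{i:\,y_i=-1}K(s_i,\theta)\right),$$
where $K$ is the Gaussian RBF kernel with bandwidth $\sigma^2$. Thus $\theta\in SVM_+$ exactly when the aggregate similarity to the label-$1$ (interior) grid points exceeds that to the label-$(-1)$ (exterior) grid points, and the whole argument reduces to controlling this competition through the tuning. I would index a sequence of bandwidths $\{\sigma_n^2\}$ to the growing grids $S_n$ and show both inclusions hold in the limit.

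For the forward direction ($\theta\in CS\Rightarrow\theta\in SVM_+$), fix $\theta\in CS$. By Lemma \ref{lemma2}, the nearest interior grid point satisfies $r_n:=\|I_n(\theta)-\theta\|\to0$ while the nearest exterior grid point stays at distance $M_n:=\|E_n(\theta)-\theta\|\to M>0$. I would lower-bound the interior sum by the single term $K(I_n(\theta),\theta)=\exp(-r_n^2/(2\sigma_n^2))$ and upper-bound the exterior sum crudely by $|S_n|\exp(-M_n^2/(2\sigma_n^2))$. The interior term then dominates as soon as $\sigma_n^2<(M_n^2-r_n^2)/(2\log|S_n|)$; since $M_n^2-r_n^2$ is eventually bounded below by $M^2/2>0$, any $\sigma_n\to0$ decaying at least as fast as $1/\sqrt{\log|S_n|}$ forces $f(\theta)=+1$. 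This strengthens the coverage statement of Theorem \ref{thm1} from the single point $\theta_0$ to every $\theta\in CS$.

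For the reverse direction ($\theta\notin CS\Rightarrow\theta\notin SVM_+$), I would first record the mirror image of Lemma \ref{lemma2}: because $S_n$ fully explores all of $\Theta$ (Assumption \ref{ass:grid}), for $\theta$ in the interior of $\Theta\setminus CS$ the nearest exterior grid point converges to $\theta$ while the nearest interior grid point is bounded away. The identical bounding argument, with the roles of the two sums exchanged, yields a threshold on $\sigma_n^2$ below which the exterior influence dominates, hence $f(\theta)=-1$. Combining the two directions, and invoking Lemma \ref{perfect_fit} to guarantee that the fit is exact on the training grid itself, gives $\theta\in CS\iff\theta\in SVM_+$ in the limit.

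The hard part will be uniformity near the boundary: the thresholds in both directions depend on the local margin $M=M(\theta)$, the distance from $\theta$ to the opposite region, and this margin shrinks to $0$ as $\theta$ approaches $\partial CS$, so no single bandwidth works uniformly over all of $\Theta$. The cleanest honest statement is therefore pointwise convergence of the classification to $\mathbf{1}[\theta\in CS]$ for every $\theta$ off the boundary, with the boundary (a measure-zero set) absorbed by the open/closed convention used to define $CS$. A secondary technical point is that the single nearest point need not strictly \emph{dominate} in the sense of the dominance definition, since as the grid densifies many interior points cluster near $\theta$; I would accordingly avoid relying on single-point dominance and keep the crude one-term lower bound against the $|S_n|$-term upper bound, which suffices precisely because the opposite-region distance $M$ stays bounded away from $0$ while same-region distances vanish. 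Making the dependence of $\sigma_n$ on $|S_n|$ and on the local margin explicit, and checking it is compatible with the $\sigma_n\to0$ required by Lemma \ref{perfect_fit}, is the step that needs the most care.
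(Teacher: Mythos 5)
Your proposal is correct and follows essentially the same route as the paper: the forward inclusion is inherited from the proof of Theorem \ref{thm1}, and the reverse inclusion is the mirrored bound (a single nearest exterior kernel term against the cardinality of the interior grid times its maximal kernel term), which yields the same admissible bandwidth window $0<2\sigma_n^2<\bigl(\|I_n(\tilde\theta)-\tilde\theta\|^2-\|E_n(\tilde\theta)-\tilde\theta\|^2\bigr)/\log|S_n^{interior}|$ that the paper intersects with the window from Theorem \ref{thm1}. Your closing caveat---that the admissible tuning depends on $\theta$ through its distance to the boundary of $CS$, so the equivalence is pointwise rather than uniform in $\theta$---is a genuine issue that the paper's own proof shares but does not acknowledge.
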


This finding is very useful in that once we finish training the SVM classifier on some moderately sized grid, we can rely on it to determine the confidence regions on much denser grids and produce about the same results as the original confidence sets. Evaluating points with an SVM is a simple computation that is much more desirable than with the original test involving moment inequalities, which can become very burdensome when the critical values depend on the parameter values.

\subsubsection{Fixed Sample Size}
Let us now fix the sample size to some finite $n$. An easy thing to do is the analogous construction to the above result, which will yield the same re-producing behavior of the SVM classifier. But instead of the correct size $\alpha$, the coverage is only lower bounded by the coverage of the corresponding $CS_n$, i.e. $\underset{\theta\in\Theta_0}{\inf}\mathbb{P}(\theta\in CS_n|\theta)$ ($\Theta_0$ is the collection of parameters values consistent with the economic model), which might not be at least $1-\alpha$ for any pre-specified $\alpha$. And likewise, we require that the grid fully explores $\Theta$ as it grows in cardinality. This is formalized in the theorem below.

\begin{theorem}
Given some finite sample size $n$, if we let the grid $S$ grow, i.e. $|S|\to\infty$, in a way that fully explores $\Theta$ based on an equidistributed sequence (as in Definition \ref{def:explore}), then there exists tuning such that $\{\theta\in CS_n\}$ implies $\{\theta\in SVM_+\}$. \footnote{Proof is a special case of Theorem 1 and Corollary 1, hence is omitted in this version.}
\end{theorem}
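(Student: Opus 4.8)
The plan is to mirror the arguments behind Theorem \ref{thm1} and Corollary \ref{corollary1}, but with the sample size $n$ frozen so that the confidence set $CS_n$ is a fixed (sample-dependent) subset of $\Theta$, and with the grid cardinality $|S|$ now playing the role that $n$ played in the large-sample case. Because the claim is only the one-sided implication $\{\theta\in CS_n\}\Rightarrow\{\theta\in SVM_+\}$ rather than the ``if and only if'' of Corollary \ref{corollary1}, I only need to guarantee that every point of the confidence set is voted into the positive class; I do not control the converse, and, crucially, I no longer invoke any coverage probability. Dropping exactly that probabilistic step of Theorem \ref{thm1} is what makes the present statement a genuine special case.

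First I would establish the finite-sample analogue of Lemma \ref{lemma2}. Fix any $\theta$ in the interior of $CS_n$ and set $\rho\equiv\inf\{\,||s-\theta||:s\in\Theta\setminus CS_n\,\}$, which is strictly positive for interior $\theta$. Since the grid $S$ is the initial segment of a single equidistributed sequence on $\Gamma(\Theta)$ and $|S|\to\infty$, full exploration (Definition \ref{def:explore}) forces the grid to become dense in $\Gamma(\Theta)$; hence the nearest grid point with label $1$, written $I(\theta)$, satisfies $||I(\theta)-\theta||\to0$, while every grid point with label $-1$ lies in $\Theta\setminus CS_n$ and is therefore at distance at least $\rho$ from $\theta$, so the nearest such point $E(\theta)$ obeys $||E(\theta)-\theta||\to\rho>0$. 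This reproduces the dichotomy of Lemma \ref{lemma2}, with the fixed gap $M$ replaced by $\rho$; the only change is that $\rho$ is pinned down by the frozen set $CS_n$ rather than by its limit.

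Next I would carry out the domination and tuning step. Using the RBF kernel and the simplified decision rule (\ref{decision_func}), the event $\theta\in SVM_+$ is equivalent to
\[
\sum_{y_i=1}\exp\!\left(-\frac{||s_i-\theta||^2}{2\sigma^2}\right)>\sum_{y_j=-1}\exp\!\left(-\frac{||s_j-\theta||^2}{2\sigma^2}\right).
\]
The left-hand side is bounded below by the single term $\exp(-||I(\theta)-\theta||^2/2\sigma^2)$, whereas the right-hand side is bounded above by $|S|\exp(-\rho^2/2\sigma^2)$. A sufficient condition is therefore $\exp\big((\rho^2-||I(\theta)-\theta||^2)/2\sigma^2\big)>|S|$. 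Since eventually $||I(\theta)-\theta||<\rho/2$, it suffices to choose a tuning sequence $\sigma^2=\sigma^2_{|S|}$ shrinking slowly, e.g. $\sigma^2_{|S|}<3\rho^2/(8\ln|S|)$; any $\sigma^2_{|S|}=o(1/\ln|S|)$ works. This is exactly the ``influence dominates'' mechanism and places us in the regime of Lemma \ref{perfect_fit}, so the nearest interior point outvotes the aggregate of the $-1$ labels and $f(\theta)=1$.

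The hard part will be the boundary of $CS_n$ together with the uniformity of the tuning over all $\theta\in CS_n$ at once. For $\theta\in\partial CS_n$ one has $\rho=0$, so both $I(\theta)$ and $E(\theta)$ collapse onto $\theta$ and the gap driving the domination estimate disappears; moreover, because $\rho(\theta)\to0$ as $\theta$ approaches the boundary, no single $\sigma^2$ can dominate for every interior point simultaneously. I would resolve this by reading the conclusion pointwise in $\theta$ (for each fixed interior $\theta$ the chosen sequence eventually classifies it as $1$) and by noting that the boundary is negligible whenever $CS_n$ is regular, e.g. the closure of its interior, so the implication holds for $CS_n$ up to its measure-zero boundary. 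A sharper treatment would tie the rate $\sigma^2_{|S|}\to0$ to the mesh of the equidistributed grid so that the inequality holds uniformly on compact subsets of the interior, but this refinement is precisely what the proof of Theorem \ref{thm1} already encapsulates, which is why the statement is presented as a corollary of it and its proof omitted.
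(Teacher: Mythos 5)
Your proposal is correct and follows essentially the same route the paper intends: the paper omits the proof, deferring to Theorem \ref{thm1} and Corollary \ref{corollary1}, and your argument reproduces exactly their kernel-domination mechanism (nearest interior point at vanishing distance, exterior points bounded away by $\rho$, and a tuning bound of the form $2\sigma^2<(\rho^2-\|I(\theta)-\theta\|^2)/\log|S|$, matching inequality (\ref{ineq:thm1}) with $CS_n$ frozen in place of $CS$). Your explicit flagging of the boundary set and of the pointwise-versus-uniform choice of $\sigma^2$ is a gap the paper's own proof of Theorem \ref{thm1} shares and handles no more rigorously.
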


Regardless of the preservation of coverage, this serves better as a theoretical check than being a useful result in practice because it defeats the purpose of coming up with efficient computational method if we actually evaluate the test on an infinite grid. Therefore, we would preferably like to have some procedure that is more computationally manageable without inflating the size of the uninformative grid to infinity, while trying to maintain the coverage probability of $CS_n$. We can continue to rely on the leading terms of some equidistributed sequence as the grid since this procedure gives the correct asymptotic coverage as discussed above. Alternatively, we can iteratively add grid points in regions with grid points of both classes to pin down the boundary for a better recovery of the finite-sample confidence set. Once we initialize the grid with an equidistributed sequence, which should have sufficient points along the boundary of $\Theta$, we can investigate the neighborhoods of each grid point with a pre-specified radius (such that there is at least one neighboring grid point in the neighborhood for every grid point). If a neighbor has a label different from the centroid, we add a new grid point between the centroid and the neighbor and obtain the label of this new point. In the next iteration, we repeat the same step for the expanded grid. This continues until some stopping criterion is met, and then we train the SVM classifier on the resulting grid. While this way we have to introduce more parameters such as the initial grid, the neighborhood radius, maximum number of iterations, etc., it allows for more flexibility and leaves case-dependent choices to the researcher, hence could achieve better finite-sample results. It is worth pointing out that, as a drawback, regions of the confidence set $CS_n$ that are undiscovered initially are unlikely to be visited later on, hence how the coverage of the SVM classifier following this procedure compares to that of $CS_n$ is ambiguous. Having and incorporating information about the true parameter $\theta_0$ would help reduce the occurrence of such event. Without knowing better, our hope is that the initial grid would pick up all disconnected regions of the underlying confidence set.

\section{Simulation}\label{simulation}
In this section, we demonstrate the use as well as performance of the SVM classifiers under the Gaussian RBF kernel in two simulation studies where we fit SVM classifiers to (i) a consistent estimator for some pre-specified set and (ii) coefficient estimates of an ordinary least squares (OLS) regression model, using arguably coarse grids especially in the latter case. We will showcase that the effectiveness of our proposed method relies not only on the density of the grid, but more importantly the properties of the inference procedure being reproduced, specifically its convergence rate as a function of sample size. Furthermore, we will validate the idea that, even though the one-time training of the SVM classifier may take some time (which is still trivial compared to evaluating the original criterion), evaluating new grid points using the trained classifier takes negligible amount of time in spaces of at least a few dimensions.

\subsection{Consistent Estimate for a Set}

\begin{figure}[b!]
    \centering
    \includegraphics[scale=0.25]{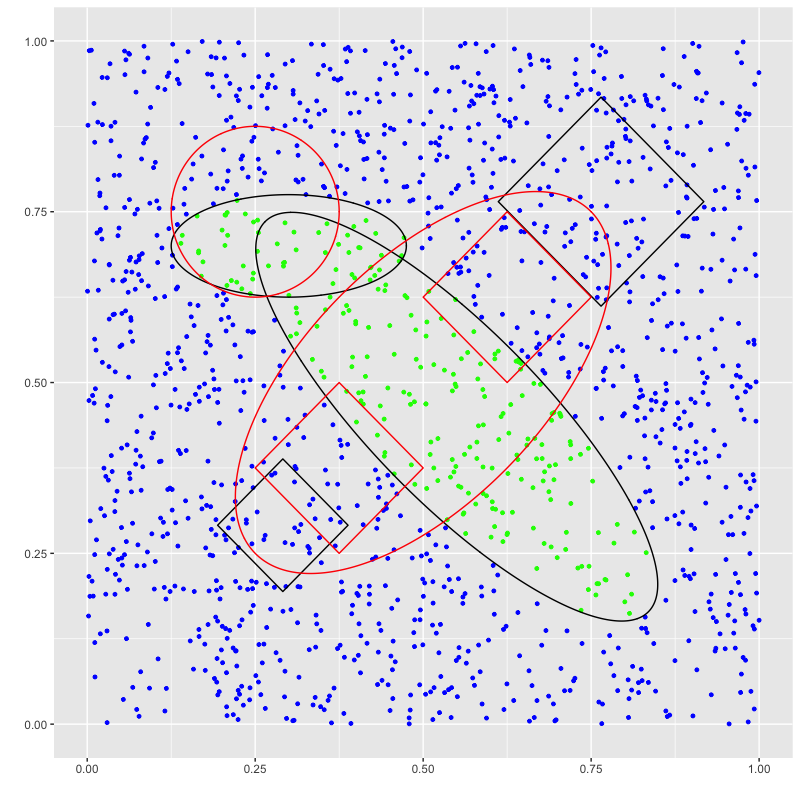}
    \includegraphics[scale=0.25]{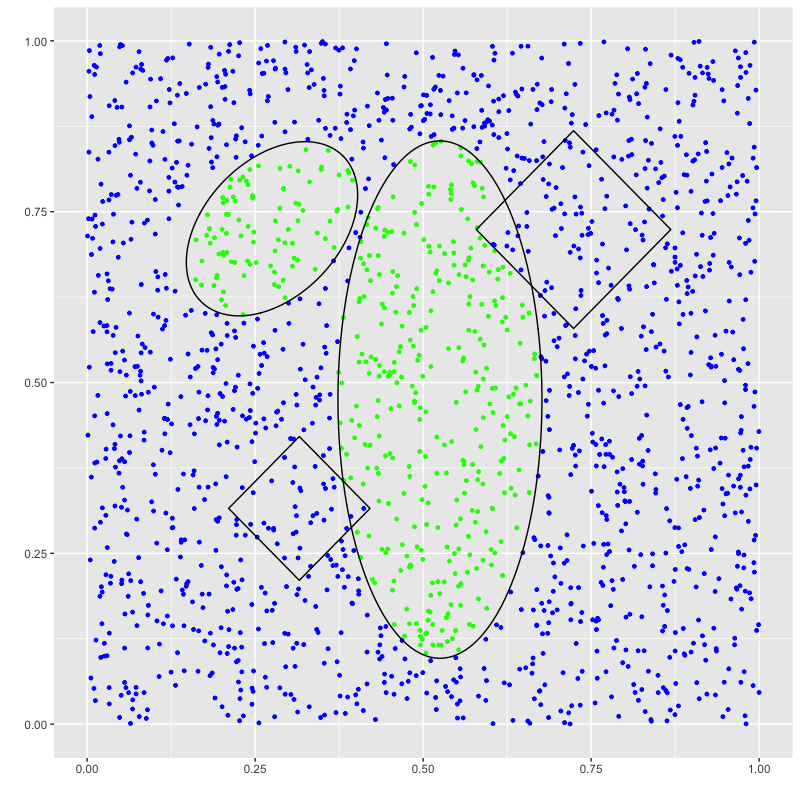}
    \includegraphics[scale=0.25]{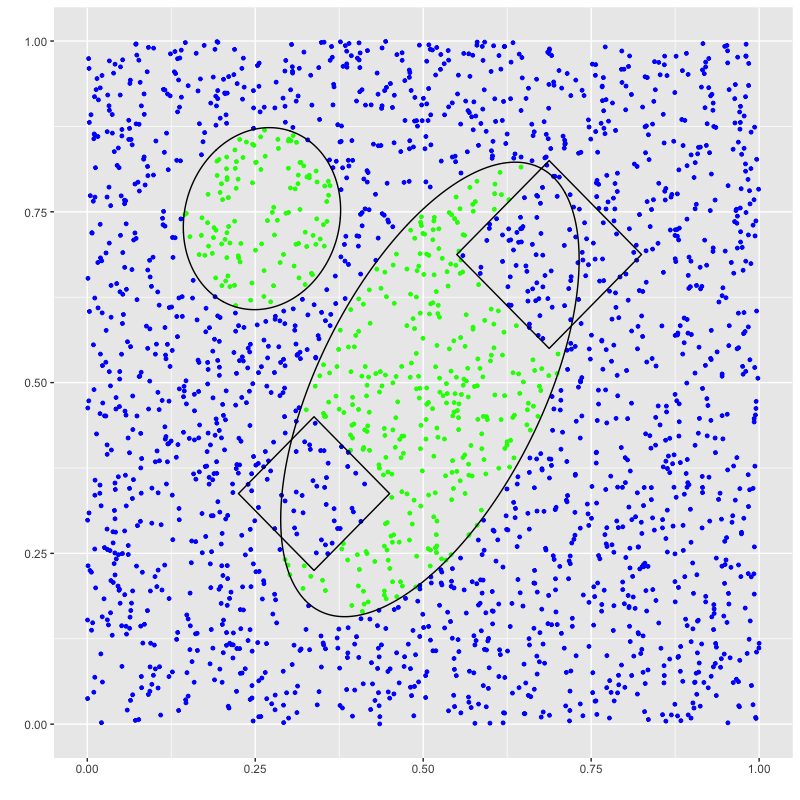}
    \includegraphics[scale=0.25]{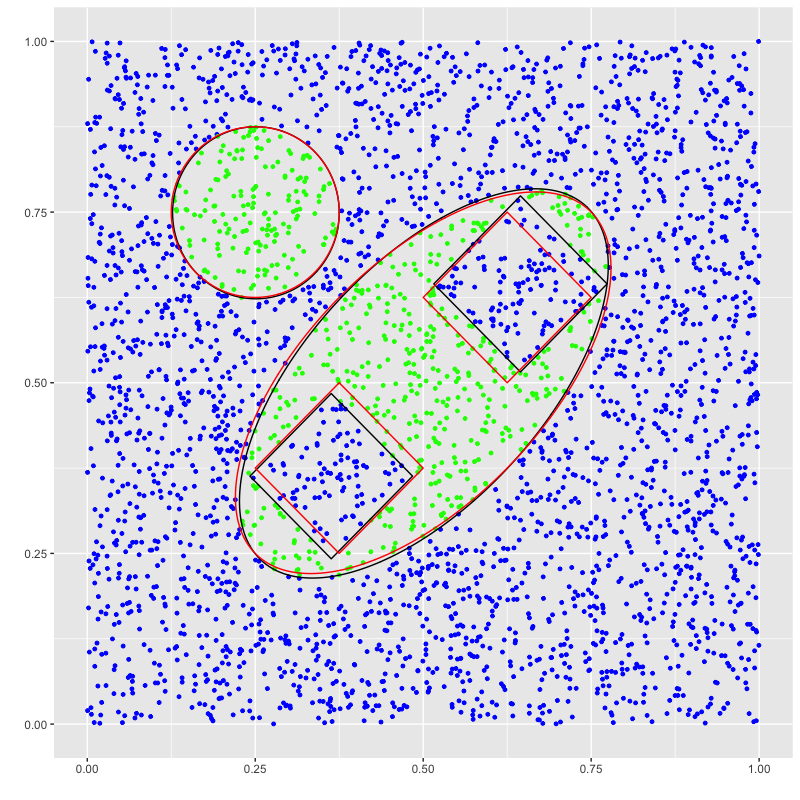}
    \caption{Sample sizes range from 20 to 5,000 and the corresponding grid sizes are between 1956 and 4259. By design, the estimated region (black) becomes very close to the true region (red) as sample size increases. Green grid points are determined based on the estimated region.}
    \label{sim1}
\end{figure}

In this first simulation study, we focus on parameter vectors of two dimensions for the purpose of visualization. We first make up some true confidence region of interest and a consistent estimate of the true region, followed by generating a grid which we evaluate using the consistent region. Then, we fit the SVM classifier using the labeled grid and compare its predictions with the true region. We also let the cardinality of the grid grow with sample size.

\begin{figure}[b!]
    \centering
    \includegraphics[scale=0.25]{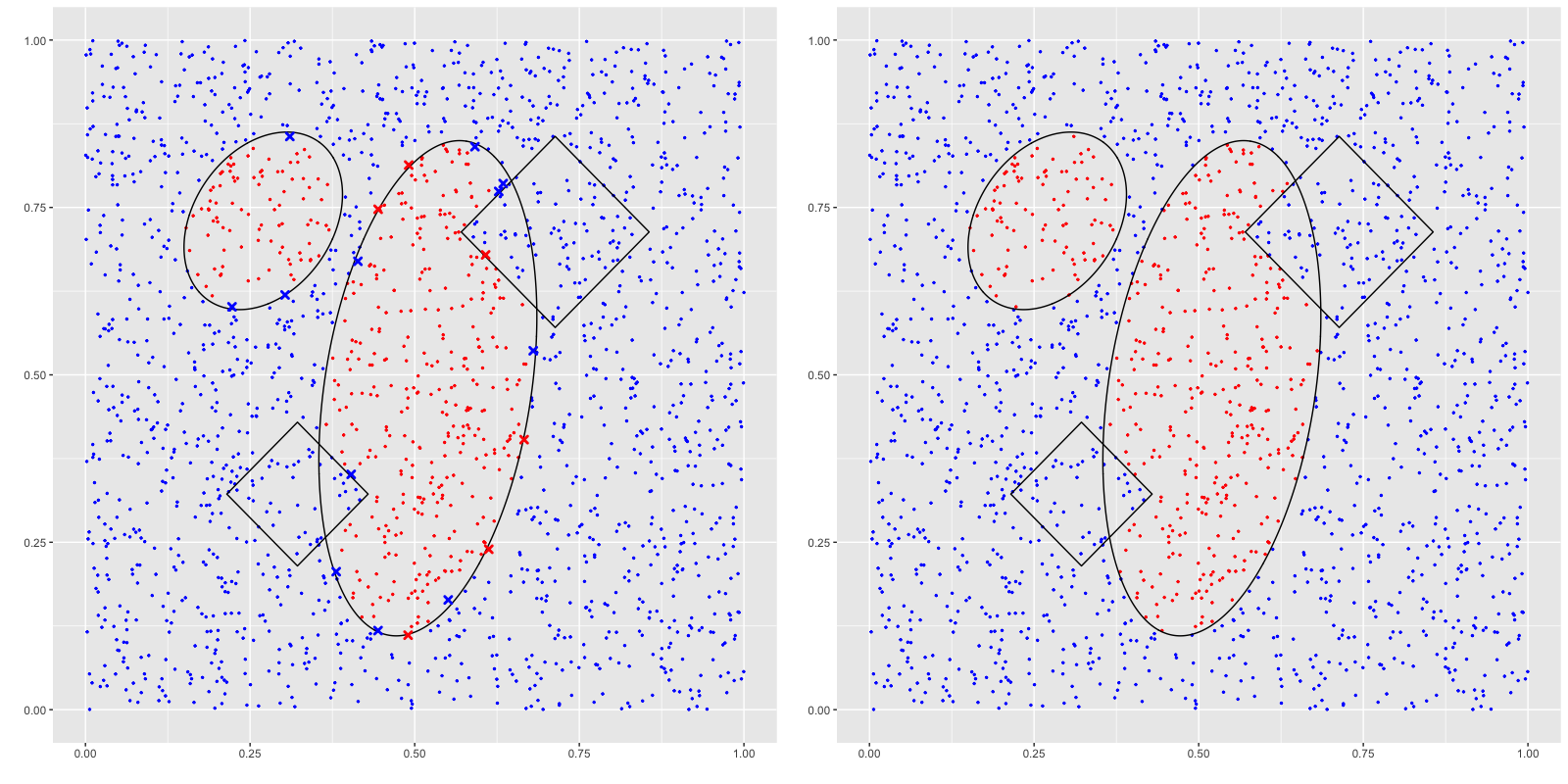}
    \includegraphics[scale=0.25]{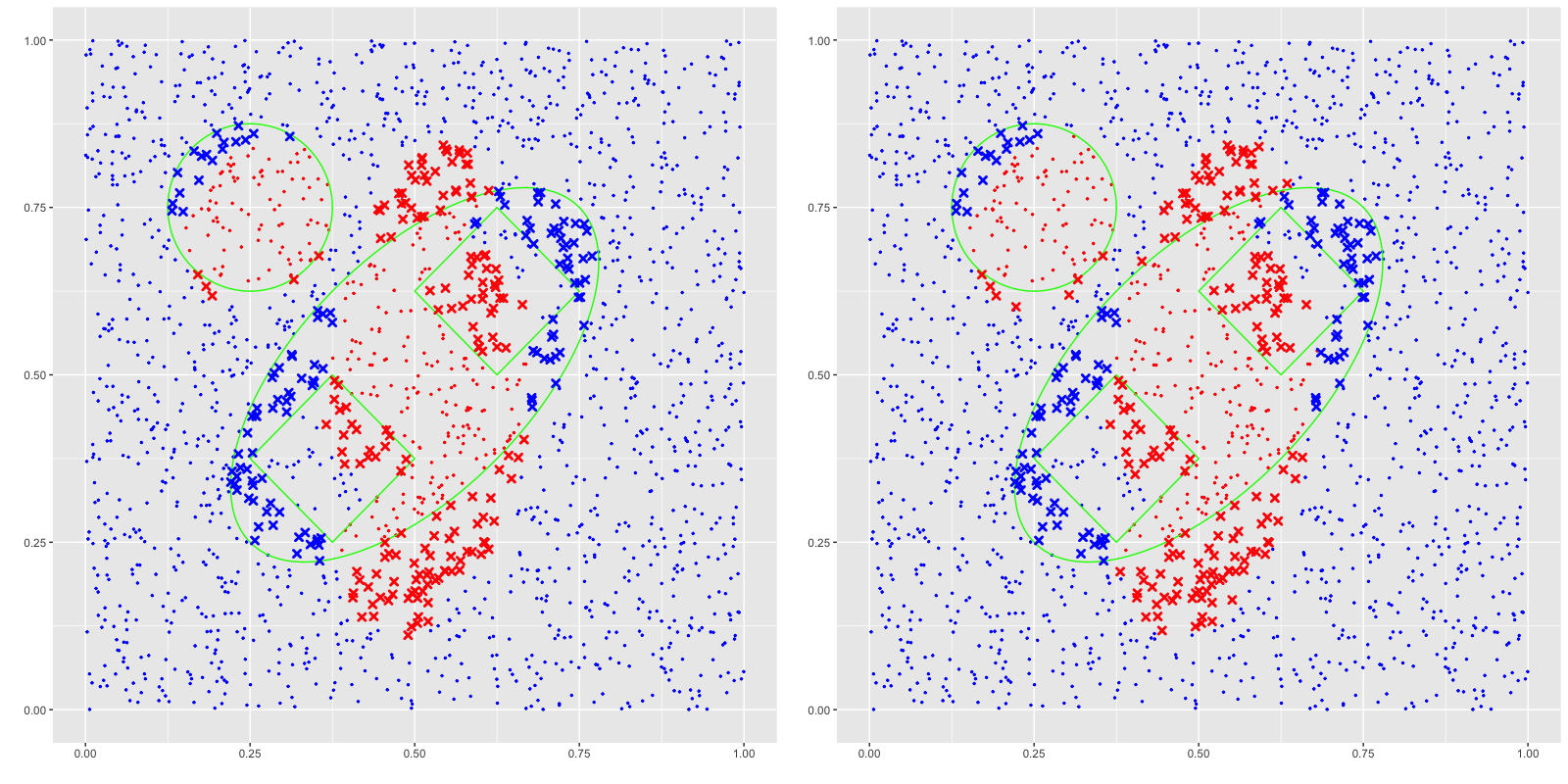}
    \caption{Sample size $n=50$. Left panels correspond to the same tuning and the right panels are governed by another set of tuning. Perfect training classification is achieved in the top-right plot; however, this does not shield us from making misclassification errors due to the small sample size.}
    \label{sim1_50}
\end{figure}

To begin with, let the true region be given by the union of (i) the red circle and (ii) the red ellipse minus the two diamonds inside the ellipse as showed in the top-left and bottom-right panels of Figure \ref{sim1}. Pretending that we have no prior information, we use a non-informative grid generated using Monte Carlo sequences whose cardinality is a function of the sample size. Particularly, 
$$|S|=\bigg[500*\log(n)\bigg],$$
where $\bigg[\cdot\bigg]$ rounds the number inside to the nearest integer.  For demonstration purposes, the consistent estimate of the true region obtained from our procedure, color-coded black in Figure \ref{sim1}, is constructed by adding some noise to the true region which vanishes as the sample size increases, hence the estimate becomes more accurate with larger sample sizes as we see in Figure \ref{sim1}. We plug each grid point into the analytical expression of the consistent estimate to assign a label to it. Subsequently, we fit an SVM classifier with the coordinates of the grid points being the features and the labels being the classes of the observations, to phrase it in the machine learning terms. And lastly, we compare the labels of the grid points predicted by the trained SVM classifier with the labels dictated by the true region and mark the mistakes in the plots. We repeat the exercise described here for different sample sizes. The results when sample sizes are 50 and 5,000 are presented in Figure \ref{sim1_50} and Figure \ref{sim1_5000}, respectively.

\begin{figure}[b!]
    \centering
    \includegraphics[scale=0.25]{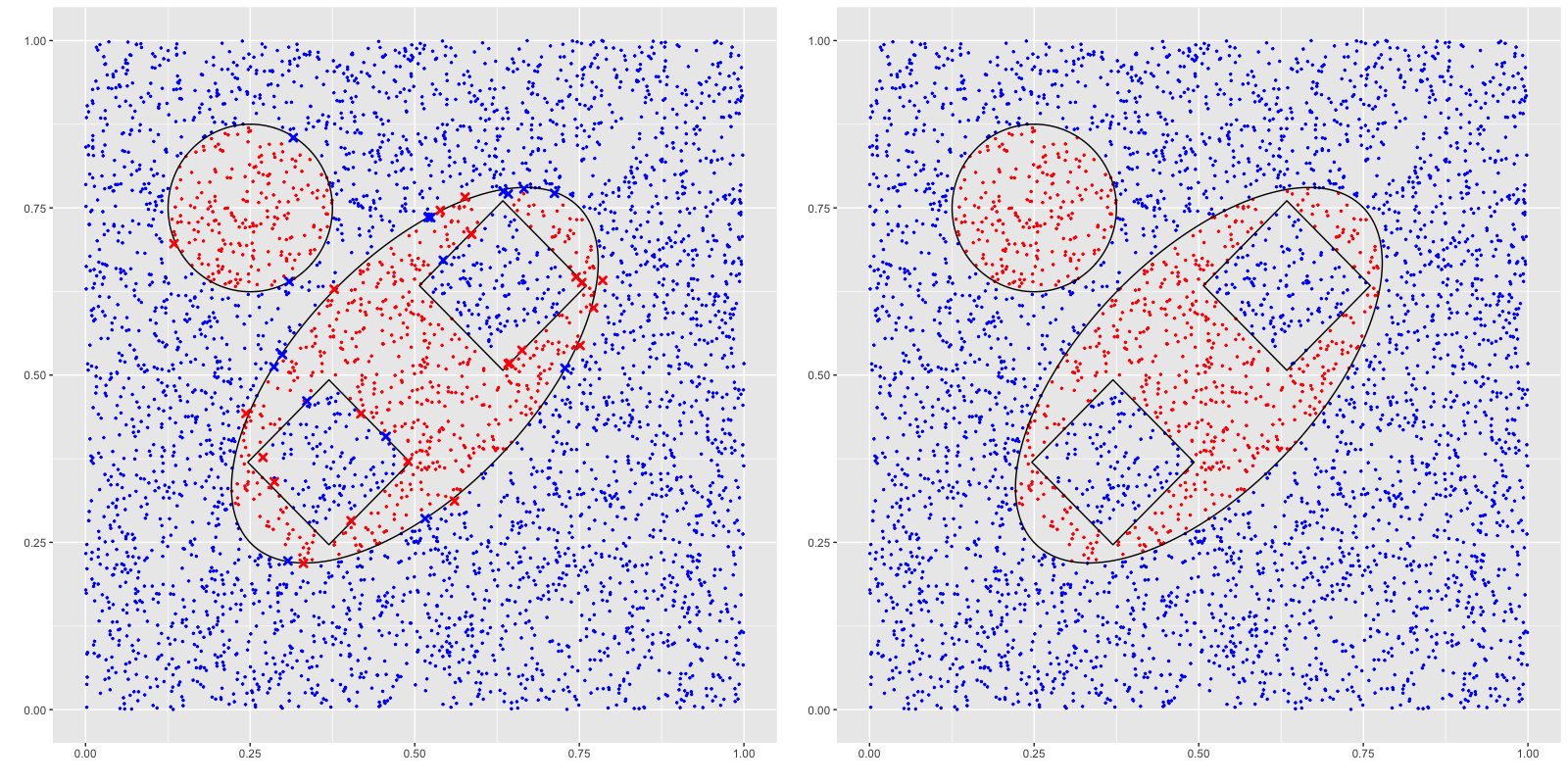}
    \includegraphics[scale=0.25]{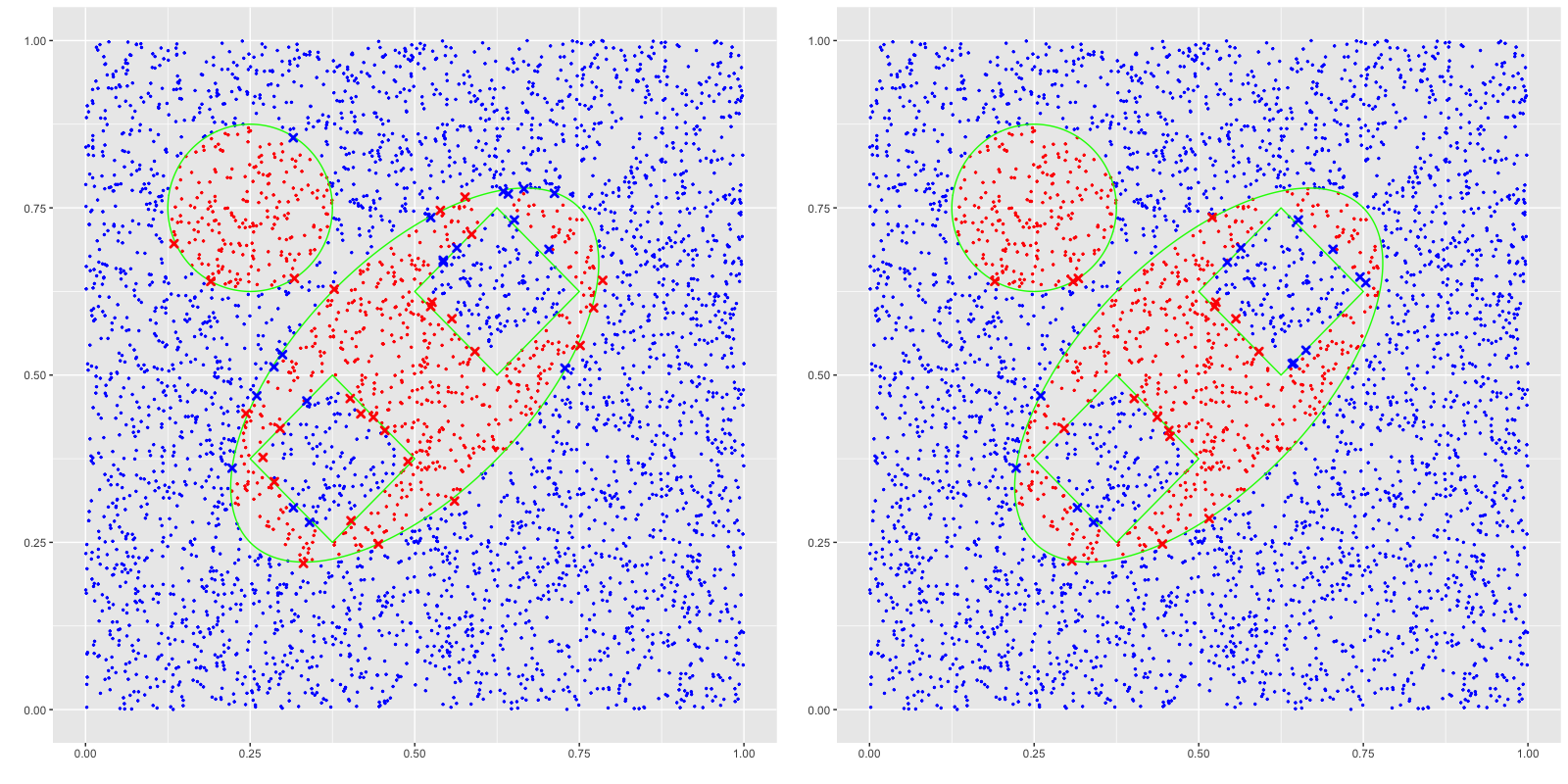}
    \caption{Sample size $n=5000$. Note that tuning for the right panels attains perfect classification (see top-right plot). For larger sample sizes and denser grids, achieving high training accuracy is more likely to make SVM classifier reproduce the true region.}
    \label{sim1_5000}
\end{figure}

The SVM classifiers are trained under different sets of tuning thus display different classification performance on the training grids. In the top panels, the black regions denote the consistent estimate of the true region which we use to label the grid points, red dots are the grid points predicted by the trained SVM classifiers to be ``inside'' the confidence set, and blue dots are predicted to be ``outside''. Furthermore, the crosses denote the mistakes by the SVM classifier. Specifically, red crosses are outside grid points that are predicted to be inside the confidence regions, and the blue crosses are inside grid points that are mistakenly predicted to be outside the region. The green regions in the bottom panels are the true confidence region. The predictions are similarly labeled when assessed against the true region in the bottom panels. We see that although we can easily achieve perfect classification on the training grid as is the case in the top-right panel of Figure \ref{sim1_50}, the bias due to the small sample size still results in tons of misclassification underneath the surface with respect to the true confidence region. Nonetheless, this procedure proves to be valid in that when we repeat the same steps under a sample size of 5,000, not only can we attain perfection on the training grid, the SVM classifier also largely captures the underlying true region as a composite result of the finer grid and the consistency of the estimate.

\subsection{OLS Coefficients}
In the second simulation study, we introduce a simple parametric OLS model and explore the out-of-sample performance of the SVM classifier. Specifically, we simulate data, including the covariates and shocks, based on the model we make up, and generate grid points which we label using the analytical formula for the confidence sets of OLS coefficients and which we use to train the SVM classifier. We examine the correct prediction rates over data that our classifier has not seen and record the computation time. The detailed steps are described as follows.

Consider a OLS model with an intercept and four covariates, so the parameter of interest is 5-dimensional provided that everything else is known and not of interest. So, we draw a random vector with five numbers taken as the true coefficients, $\beta_0$. With a multivariate design matrix and some error of length $n=500$, it is easy to generate the dependent variable, hence the estimate of $\beta_0$ as well as its covariance matrix. Next, we construct the Monte Carlo grid of cardinality $|S|=\bigg[6^{\log(n)}\bigg]=68,534$ and assign the labels with the following criterion. For each grid point $s$, we assign label $1$ if 
$$(s-\hat{\beta})^\top\hat{V}^{-1}(s-\hat{\beta})<\chi^2_5(0.95),$$
where $\hat{\beta}$ is the usual estimator for $\beta_0$, $\hat{V}$ is its covariance matrix, and $\chi^2_5(0.95)$ denotes the $95$th percentile of the $\chi^2$ distribution with $5$ degrees of freedom. Since a grid of such cardinality is very sparse in the $5$-dimensional real space, we end up with only $853$ grid points with label $1$, which is about $1.2\%$ of all grid points. For different ratios between the two sub-grids, we repeat the following $2,000$ times and compute the average outcomes. We split the grid into training grid and test grid. Then, we fit an SVM classifier using the training grid and their corresponding labels, and make it predict the labels of the test grid to see its prediction accuracy. Additionally, we also test whether $\beta_0$ is predicted to have label $1$, i.e. covered by the regions reproduced by the SVM classifier. Note that $\beta_0$ satisfies the above inequality, thus lies in the sample $95\%$ confidence set. For the purpose of comparison, we hold the tuning fixed throughout this subsection. The total run-time of $2,000$ iterations is reported for each configuration (i.e. training-test split ratio). Results are summarized in Table \ref{table_ols}. 

\begin{table}[H]
    \centering
    \begin{tabular}{ccccc}
        \hline
        Training & Test & Test accuracy (\%) & \% Capturing $\beta_0$ & Time (sec) \\
        \hline
        \hline
        54,827 & 13,707 & 99.15 & 100.00 & 23,630 \\
        28,784 & 39,750 & 98.99 & 94.65 & 8,146 \\
        13,707 & 54,827 & 98.86 & 45.65 & 3,120 \\
        3,427 & 65,107 & 98.77 & 8.8 & 1,140 \\
        \hline
    \end{tabular}
    \caption{Results from 2000 iterations of the described simulation study over a grid of $68,534$ points on the Roar cluster. }
    \label{table_ols}
\end{table}

Overall, both the prediction accuracy and coverage on $\beta_0$ grow when the classifier is trained with more data, i.e. grid points, ceteris paribus. Moreover, it only takes small training sizes to yield high test accuracies. On the other hand though, since incidentally very few grid points lie in the vicinity of $\beta_0$ in the original grid, coverages are low  when the training-test ratio is low. This should raise no concern because as the grid size grows in an equidistributed way, gaps are gradually filled with grid points. We see that the coverage of $\beta_0$ very quickly picks up as we train the SVM classifier with more grid points.  While run-time appears to grow exponentially with training data, the testing process takes almost no additional time. The implication of this in practice is, assigning labels on the denser grid using the trained SVM classifier will be very fast.

\subsection{Moment Condition Models}\label{simulation_cht}
We simulate the construction of the confidence set based on moment conditions in this subsection. In particular, we follow the approach proposed in \cite{CHT2007} (henceforth CHT) and employed in \cite{Ciliberto_Tamer} where the confidence set is based on keeping the all parameter vectors that satisfy an objective function resulted from moment conditions. That is, the collection of parameter values the evaluation of the criterion at which meets the moment conditions, i.e. 
\begin{equation}\label{cht_cs}
C_n(c)=\{\theta\in\Theta:n[\hat{Q}_n(\theta) - \min_t\hat{Q}(t)]\leq c\}
\end{equation}
Note that this approach involves minimization of the objective function for the entire data followed by minimization for each bootstrap subsample using some off-the-shelf solver, such as Nealder-Mead and other genetic algorithm like simulated annealing in \cite{Ciliberto_Tamer}. Subsequently, a grid search over $C_n(c_{(0)})$ given some initial $c_{(0)}$ is performed to find the maximum of $b(Q_b(\theta)-\min_t Q_b(t))$ for each bootstrap sample of size $b$. Once all bootstrap optimal values are computed, the empirical distribution of the values is formed and the empirical $\alpha$-quantile is chosen to replace $c$ in \ref{cht_cs}. \cite{Ciliberto_Tamer} repeats the above construction of critical value $c$ one more time, and then retains all qualified parameter vectors via another grid search. 

Consider a standard least squares problem with 3 parameter of interest, an intercept and two slope coefficients, as the objective. We set the sample size $n=1,000$ and bootstrap sample size $b=250$, and draw some $X\in\mathbb{R}^3$ to make $Y$ for each observation. For simplicity, we assume the model is correctly specified and hence obviating the inner minimization in the the above construct (\ref{cht_cs}), which reduces to the set $\{\theta\in \Theta: n\hat{Q}_n(\theta) \leq c\}$ where $\hat{Q}_n(\theta)=\frac{1}{n}\sum_i(Y_i-X_i^\top\theta)^2$. We use Nealder-Mead solver to obtain the minimal values of $\hat{Q}_n(\theta)$ using the entire sample as well as in the bootstrap sub-routine. The parameter space, a subset of $\mathbb{R}^3$, is divided into a uniform grid on the cube consisting of 101 values for each parameter and hence a total of $101^3$ grid points. This is arguably a very coarse grid since the points along each parameter dimension are fairly sparsely located, containing only 5 values between each pair of integers. Nonetheless, performing the brute-force CHT routine on such grid, with a programming choice of R without parallelization, runs for 3 hours on Penn State Roar Collab server with 16GB memory. We repeat the same exercise on a smaller grid with 51 points along each dimension and use the larger grid as benchmark to examine the difference between standard procedure and the use of SVM classifier. 

\begin{figure}
    \centering
    \includegraphics[scale=0.15]{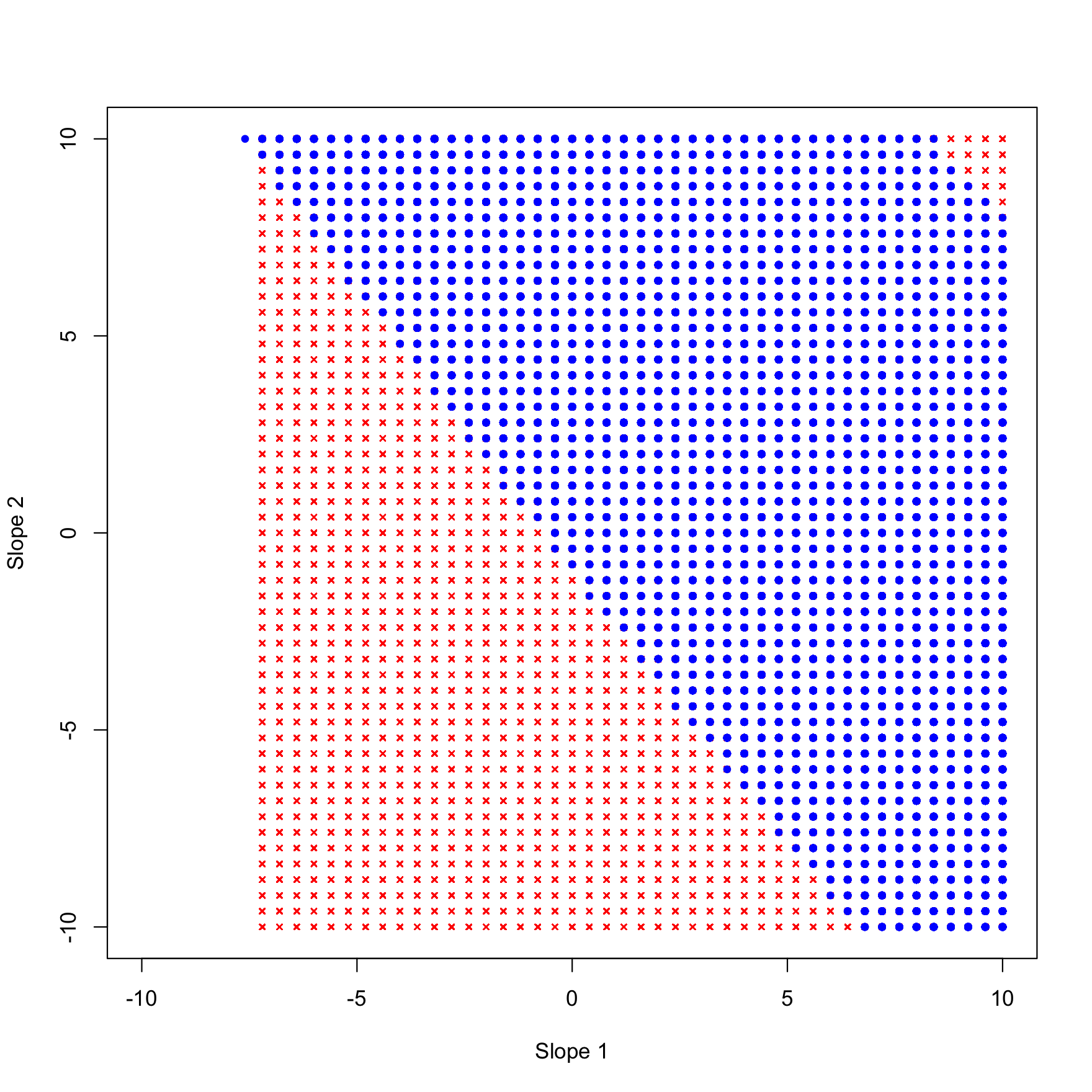}
    \caption{The accepted grid points (\textcolor{blue}{blue}) form a non-rectangular area. Reporting interval for each parameter leads readers to potentially falsely accept points outside of the joint confidence set (\textcolor{red}{red}).}
    \label{cht_cs}
\end{figure}

The relationship between the two slope coefficients is depicted in Figure \ref{cht_cs} based on the search on the $51^3$ grid. The blue dots are accepted according to CHT procedure and the red crosses are rejected. We note that once we report the confidence interval for each parameter, information is lost in that both blue dots and red crosses appear to be in the joint rectangle. For the SVM classifier, we train the model based on the coarser grid and examine its ability to capture the underlying confidence region using the large grid of $101^3$ points. As described above, labels for the purpose of classification are assigned according to each grid point's membership to the confidence set, i.e. 1 if the grid point is accepted according to CHT and -1 otherwise. We calculate the percentages of grid points classified to be in the CHT CS out of all grid points on both the training grid (i.e. the $51^3$ grid) and the test grid (i.e. the $101^3$ grid), denoted as ``\% Labeled CS''. We also compare the predicted labels against the labels dictated by the CHT procedure, and denote the percentage of labels agreed by both approaches as ``\% Correct''. Results are reported in Table \ref{table_cht}. The same calculations for the CHT procedure with rectangular region is also included for exposition. We emphasize again that, rather than making wrong predictions, the collection of individual intervals does not tell us anything about the confidence set but its minimum bounding box, which the SVM classifier is able to overcome. 

An immediate observation is the rectangle falsely includes many grid points that do not belong to the CS and leads to lower statistical power due to the fact that the joint confidence region does not admit a rectangular shape. To correct this means we would have to store all the accepted grid points; and even so, there is generally no prescription on new parameter vectors, e.g. a point that lies in the convex hull but outside of the concave hull of the accepted grid points, without evaluating the criterion that is based on moment conditions. Therefore, it would be much more concise and precise to consider instead the decision function of the form in equation (\ref{decision_func}) for $K(\theta_1,\theta_2)=\exp\{||\theta_1-\theta_2||^2/\sigma^2\}$ with a properly tuned $\sigma^2$. This allows us to easily determine if any parameter vector including those outside the grid points belongs to the joint confidence region by simply evaluating the function at the parameter vector. Lastly, we note that while such grid of $51^3$ points is fairly coarse, it is already so abundant that tuning of the kernel is essentially unnecessary if time is of utmost important. In the un-tuned classifier, the number of support vectors is 4,901, which amount to summing up 4,901 evaluations of simple operations. It takes about 16 seconds in the aforementioned computing environment for the trained classifier to classify each 10,000 grid points. The properly tuned classifier is comprised of 56,004 support vectors in this case. 

\begin{table}
    \centering
    \begin{tabular}{ccccc}
        \hline
        Method & \% Labeled CS ($51^3$) & \% Correct ($51^3$) & \% Labeled CS ($101^3$) & \% Correct ($101^3$) \\
        \hline
        \hline
        Rectangle & 86.275 & 58.758 & 87.128 & 44.695 \\
        SVM (no tuning) & 45.053 & 99.928 & 45.165 & 98.540 \\
        SVM (tuned) & 45.032 & 100.000 & 44.681 & 99.024 \\
        \hline
    \end{tabular}
    \caption{Comparison between standard reporting of intervals and SVM classifiers.}
    \label{table_cht}
\end{table}

\section{Conclusion}\label{conclusion}
In this paper, we devise an efficient and reliable way of reproducing the confidence sets in the context of moment conditions. These problems typically involve inverting the test and sub-sampling or re-sampling when constructing critical values, and are computationally complex especially when the dimensionalily of the parameter grows. Combining the uses of the support vector machine and the equidistributed sequences, we are able to consistently re-produce the limiting confidence regions as the sample size grows under common assumptions and mild conditions. This procedure is very easy to implement in practice. Researchers can generate grid points and fit the SVM classifier utilizing the existing packages in standard programming languages. Researchers can also incorporate an adaptive procedure into the generation of grid points for better approximation of the confidence sets when the sample size is fixed. Additionally, the use of decision functions provides much higher efficiency and precision than reporting the rectangular regions from the exhaustive search. This approach might also be used to compute and characterize the identified sets, or the confidence sets of the identified sets, so long as they have a similar formulation.

\bibliographystyle{apalike}
\bibliography{lib.bib}

\begin{thebibliography}{}

\bibitem[Andrews et~al., 2019]{ARP2019}
Andrews, I., Roth, J., and Pakes, A. (2019).
\newblock Inference for linear conditional moment inequalities.

\bibitem[Aradillas-Lopez and Gandhi, 2016]{ALG-2016}
Aradillas-Lopez, A. and Gandhi, A. (2016).
\newblock Estimation of games with ordered actions: An application to
  chain-store entry: Estimation of games with ordered actions.
\newblock {\em Quantitative Economics}, 7:727--780.

\bibitem[Chandrasekharan, 1969]{Chandrasekharan1969}
Chandrasekharan, K. (1969).
\newblock Introduction to analytic number theory.

\bibitem[Chernozhukov et~al., 2007]{CHT2007}
Chernozhukov, V., Hong, H., and Tamer, E. (2007).
\newblock Estimation and confidence regions for parameter sets in econometric
  models.
\newblock {\em Econometrica}, 75(5):1243--1284.

\bibitem[Ciliberto and Tamer, 2009]{Ciliberto_Tamer}
Ciliberto, F. and Tamer, E. (2009).
\newblock Market structure and multiple equilibria in airline markets.
\newblock {\em Econometrica}, 77(6):1791--1828.

\bibitem[Cortes and Vapnik, 1995]{CV-1995}
Cortes, C. and Vapnik, V. (1995).
\newblock Support vector networks.
\newblock {\em Machine Learning}, 20:273--297.

\bibitem[Cox and Shi, 2022]{cox_shi_2022}
Cox, G. and Shi, X. (2022).
\newblock {Simple Adaptive Size-Exact Testing for Full-Vector and Subvector
  Inference in Moment Inequality Models}.
\newblock {\em The Review of Economic Studies}, 90(1):201--228.

\bibitem[Guliyev and Ismailov, 2018]{universal_approx}
Guliyev, N.~J. and Ismailov, V.~E. (2018).
\newblock Approximation capability of two hidden layer feedforward neural
  networks with fixed weights.
\newblock {\em Neurocomputing}, 316:262--269.

\bibitem[Hastie et~al., 2001]{Elements}
Hastie, T., Tibshirani, R., and Friedman, J. (2001).
\newblock {\em The Elements of Statistical Learning}.
\newblock Springer Series in Statistics. Springer New York Inc., New York, NY,
  USA.

\bibitem[Imbens and Manski, 2004]{imbens_manski}
Imbens, G.~W. and Manski, C.~F. (2004).
\newblock Confidence intervals for partially identified parameters.
\newblock {\em Econometrica}, 72(6):1845--1857.

\bibitem[Judd, 1998]{Judd1998}
Judd, K.~L. (1998).
\newblock {\em {Numerical Methods in Economics}}, volume~1 of {\em MIT Press
  Books}.
\newblock The MIT Press.

\bibitem[Kalouptsidi et~al., 2020]{KKLS-2019}
Kalouptsidi, M., Kitamura, Y., Lima, L., and Souza-Rodrigues, E.~A. (2020).
\newblock Partial identification and inference for dynamic models and
  counterfactuals.
\newblock Working Paper 26761, National Bureau of Economic Research.

\bibitem[Keerthi and Lin, 2003]{KL-2003}
Keerthi, S.~S. and Lin, C.-J. (2003).
\newblock Asymptotic behaviors of support vector machines with gaussian kernel.
\newblock {\em Neural Comput.}, 15(7):1667–1689.

\bibitem[Li and Henry, 2022]{Li_Henry}
Li, L. and Henry, M. (2022).
\newblock {Finite Sample Inference in Incomplete Models}.
\newblock Papers 2204.00473, arXiv.org.

\bibitem[Manski and Tamer, 2002]{manski_tamer}
Manski, C.~F. and Tamer, E. (2002).
\newblock Inference on regressions with interval data on a regressor or
  outcome.
\newblock {\em Econometrica}, 70(2):519--546.

\bibitem[Miranda and Fackler, 2004]{miranda_fackler}
Miranda, M.~J. and Fackler, P.~L. (2004).
\newblock {\em {Applied Computational Economics and Finance}}, volume~1 of {\em
  MIT Press Books}.
\newblock The MIT Press.

\bibitem[Romano and Shaikh, 2010]{Romano_Shaikh}
Romano, J.~P. and Shaikh, A.~M. (2010).
\newblock Inference for the identified set in partially identified econometric
  models.
\newblock {\em Econometrica}, 78(1):169--211.

\bibitem[Rosen, 2008]{Rosen2008}
Rosen, A. (2008).
\newblock Confidence sets for partially identified parameters that satisfy a
  finite number of moment inequalities.
\newblock {\em Journal of Econometrics}, 146(1):107--117.

\bibitem[Wolak, 1991]{Wolak1991}
Wolak, F.~A. (1991).
\newblock The local nature of hypothesis tests involving inequality constraints
  in nonlinear models.
\newblock {\em Econometrica}, 59(4):981--995.

\bibitem[Wu and Zhang, 2014]{WZ-2014}
Wu, H.-S. and Zhang, F.-M. (2014).
\newblock Wolf pack algorithm for unconstrained global optimization.
\newblock {\em Mathematical Problems in Engineering}.

\bibitem[Zhu et~al., 2007]{svm_parallel}
Zhu, K., Wang, H., Bai, H., Li, J., Qiu, Z., Cui, H., and Chang, E. (2007).
\newblock Parallelizing support vector machines on distributed computers.
\newblock In Platt, J., Koller, D., Singer, Y., and Roweis, S., editors, {\em
  Advances in Neural Information Processing Systems}, volume~20. Curran
  Associates, Inc.

\end{thebibliography}

\begin{appendices}
\section{Equidistributed Sequences}\label{grid}
The equidistributed sequences are designed to best explore the spaces of our interest. Some sequences are random in nature, such as the Monte Carlo sequences, whereas some others display patterns as we will see below. We first introduce the definition of equidistributed sequences, and then provide an argument that they leave "no gap" in the space. In other words, for any point in the $d$-dimensional Euclidean space, we can approach the point arbitrarily closely with equidistributed sequences. Denote $\lambda_d$ as the $d$-dimensional Lebesgue measure. The definition of equidistributed sequences is given below, extending the 1-dimensional definition introduced in \cite{Chandrasekharan1969} to potentially high-dimensional space $\mathbb{R}^d$. 

\begin{definition}
A sequence $S=(s_1,s_2,s_3,\cdots)$ in $\mathbb{R}^d$ is said to be \textit{equidistributed} on the box $\Gamma\subset\mathbb{R}^d$ if for every box $\Delta\subseteq\Gamma$, 
$$\lim_{n\to\infty}\frac{\sum_{i=1}^n \mathbbm{1}\{s_i\in \Delta\}}{n}=\frac{\lambda_d(\Delta)}{\lambda_d(\Gamma)}.$$
\end{definition}

With this, we can proceed to show that grids generated using equidistributed sequences explore the space in an exhaustive manner. Specifically, for any point in the space, we will show that, in the limit, there will always exist a point from an equidistributed sequence which is in any arbitrarily small neighborhood of the point.

\begin{lemma}\label{lemma3}
Let $S=(s_1,s_2,s_3,\cdots)$ be an equidistributed sequence in $\Gamma\subset\mathbb{R}^d$. Then, $\forall\theta\in\Gamma$ and $\forall\epsilon>0$, there exists some $s^*\in S$ such that $||\theta-s^*||\leq\epsilon$.
\end{lemma}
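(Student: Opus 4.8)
The plan is to read off the conclusion directly from the defining property of equidistribution. The key observation is that if a box $\Delta\subseteq\Gamma$ has strictly positive Lebesgue measure, then the fraction of terms of $S$ falling inside $\Delta$ converges to the \emph{positive} constant $\lambda_d(\Delta)/\lambda_d(\Gamma)$, and a positive limiting frequency forces the count of terms in $\Delta$ to grow without bound. Hence it suffices to fit a small nondegenerate box around $\theta$ that lies entirely within both $\Gamma$ and the $\epsilon$-ball about $\theta$; the definition then guarantees a (indeed infinitely many) sequence point inside it.

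First I would fix $\theta\in\Gamma$ and $\epsilon>0$ and construct such a box $\Delta$. Writing $\Gamma=\prod_{i=1}^d[a_i,b_i]$ with $a_i<b_i$, I build $\Delta$ coordinate by coordinate. For each $i$, since $a_i\le\theta_i\le b_i$ and the factor is nondegenerate, there is room to choose a nondegenerate interval $[c_i,d_i]\subseteq[a_i,b_i]$ with $\theta_i\in[c_i,d_i]$ and with length $d_i-c_i$ as small as desired. Taking every side length smaller than $\epsilon/\sqrt{d}$ ensures that $\Delta=\prod_{i=1}^d[c_i,d_i]\subseteq\Gamma$, that $\theta\in\Delta$, that the diameter of $\Delta$ is at most $\epsilon$, and that $\lambda_d(\Delta)=\prod_{i=1}^d(d_i-c_i)>0$.

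Then I would apply the equidistribution hypothesis to this $\Delta$. Setting $N_n=\sum_{i=1}^n\mathbbm{1}\{s_i\in\Delta\}$, the definition gives $N_n/n\to\lambda_d(\Delta)/\lambda_d(\Gamma)=:c>0$. Because $c>0$ while $n\to\infty$, the numerator $N_n$ must diverge to infinity, so in particular $N_n\ge 1$ for all sufficiently large $n$, and at least one term $s^*\in S$ satisfies $s^*\in\Delta$. Since both $\theta$ and $s^*$ lie in $\Delta$ and $\Delta$ has diameter at most $\epsilon$, we conclude $||\theta-s^*||\le\epsilon$, which is exactly the claim (and the argument incidentally shows that infinitely many terms of $S$ fall within $\epsilon$ of $\theta$).

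The main obstacle I anticipate is the boundary case: when $\theta$ lies on a face, edge, or corner of $\Gamma$, the box $\Delta$ cannot straddle $\theta$ in every coordinate, and one must place the nondegenerate interval on the interior side in those coordinates so that $\Delta\subseteq\Gamma$ is preserved while still keeping $\theta\in\Delta$ and the diameter below $\epsilon$. The coordinate-by-coordinate construction above is designed precisely to handle this, since nondegeneracy of each factor $[a_i,b_i]$ always leaves slack on at least one side of $\theta_i$. A secondary, purely cosmetic point is matching the weak inequality in the conclusion; working with a closed box $\Delta$ containing $\theta$ of diameter at most $\epsilon$ yields $||\theta-s^*||\le\epsilon$ without any boundary fuss.
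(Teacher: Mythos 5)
Your proof is correct and follows essentially the same route as the paper's: inscribe a nondegenerate box of positive Lebesgue measure inside $B_\epsilon(\theta)\cap\Gamma$, then invoke the definition of equidistribution to conclude that the limiting (positive) frequency forces at least one sequence term into that box. The only difference is cosmetic — your coordinate-by-coordinate construction with side lengths below $\epsilon/\sqrt{d}$ handles the interior and boundary cases uniformly, whereas the paper inscribes a hypercube in the ball and treats the boundary case separately.
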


\begin{proof}
For some $\theta\in\Gamma$, let $B_\epsilon(\theta)$ be the ball centered at $\theta$ with radius $\epsilon$. First, we consider $\theta\in\text{int}(\Gamma)$, and assume for now $B_\epsilon(\theta)\subset\Gamma$. Let $C_\epsilon(\theta)$ be the hypercube such that the distance from any vertex to its center $\theta$ is $\epsilon$. In 2 dimensions, $C_\epsilon(\theta)$ is the square centered at $\theta$ whose vertices lie on the circle $B_\epsilon(\theta)$; and in 3 dimensions, $C_\epsilon(\theta)$ is a cube (whose edges are of equal length) centered at $\theta$ with all vertices lying on the ball $B_\epsilon(\theta)$. Since $C_\epsilon(\theta)$ is a box in $\Gamma$ and $\lambda_d(C_\epsilon(\theta))>0$, it follows that $\frac{\lambda_d(C_\epsilon(\theta))}{\lambda_d(\Gamma)}>0$. Therefore, by definition of equidistributed sequences, we have that 
$$\frac{\lambda_d(C_\epsilon(\theta))}{\lambda_d(\Gamma)}=\lim_{n\to\infty}\frac{\sum_{i=1}^n \mathbbm{1}\{s_i\in C_\epsilon\}}{n}>0.$$
This implies that $S\bigcap C_\epsilon(\theta)\neq\emptyset$, because there must be a positive number of points in $S$ that lie in $C_\epsilon(\theta)$. Take $s^*\in(S\bigcap C_\epsilon(\theta))$, and we have $s^*\in B_\epsilon(\theta)$, or equivalently $||\theta-s^*||\leq\epsilon$. 

To complete the proof, if $B_\epsilon(\theta)\subsetneq\Gamma$, we take instead $\Bar{\epsilon}\in(0,\epsilon)$ such that $B_{\bar{\epsilon}}(\theta)\subset(B_\epsilon(\theta)\bigcap\Gamma)$ (which cannot be empty since $\theta\in\text{int}(\Gamma)$), and apply the above analysis to $B_{\bar{\epsilon}}(\theta)$, hence the point $s^*\in B_{\bar{\epsilon}}(\theta)\subset B_\epsilon(\theta)$. And lastly, if $\theta$ lies on a face (or edge) of $\Gamma$, we construct a hyperrectangle inside the intersection of $B_\epsilon(\theta)$ and $\Gamma$ with positive volume (with respect to $\lambda_d$) such that $\theta$ lies on the face (or edge) of the hyperrectangle. Then we apply the same argument as above.
\end{proof}

Now we present how the grid points look under different generating sequences. Figures \ref{fig:mc} through \ref{fig:baker} show in the two-dimensional space the grid points generated from Monte Carlo sequences, Sobol sequences, Weyl sequences, Bakers sequences, respectively. Going from the top-left panel to the bottom-right, we present the grids of 20, 100, 500, and 2000 grid points in each of the following four figures. 
\begin{figure}[H]
    \centering
    \includegraphics[scale=0.65]{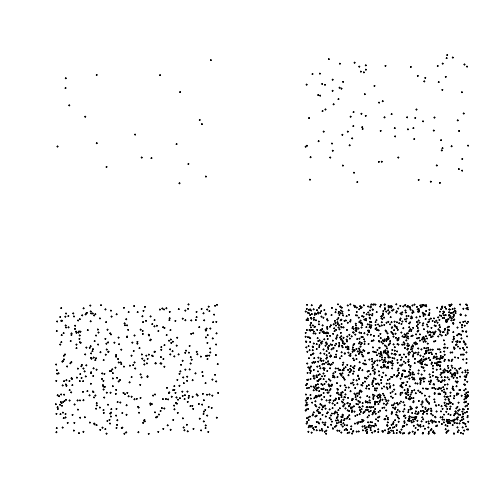}
    \caption{Monte Carlo Points}
    \label{fig:mc}
\end{figure}

\begin{figure}
    \centering
    \includegraphics[scale=0.65]{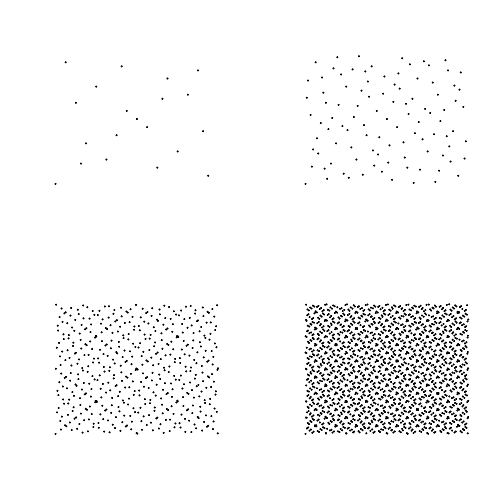}
    \caption{Sobol Points}
    \label{fig:sobol}
\end{figure}

\begin{figure}
    \centering
    \includegraphics[scale=0.65]{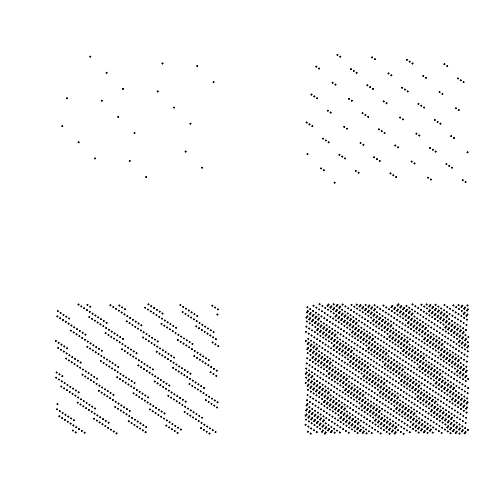}
    \caption{Weyl Points}
    \label{fig:weyl}
\end{figure}

\begin{figure}
    \centering
    \includegraphics[scale=0.65]{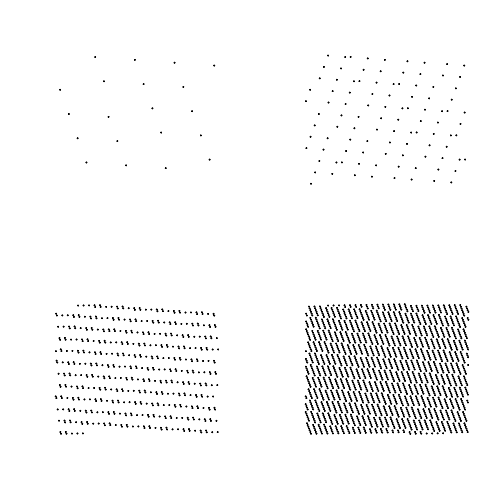}
    \caption{Baker Points}
    \label{fig:baker}
\end{figure}

We see that the Monte Carlo random sequences, although random in nature, leave out a lot of blank regions, hence might not buy us the optimal performance sampling the space. The uniformity issue is addressed better by the Sobol sequences. Besides the pattern, we see that the grid points are pretty evenly spread out in the space. Furthermore, Weyl and Baker points display patterns which can be use when we believe certain regions of the parameter space are under-visited. 

\section{Proof of the Main Results}\label{proofs}
\subsection{Proof of Lemma \ref{lemma2}}
\begin{proof}
We will use a standard argument based on the definition of the interior points. By definition, for every $\theta\in CS$, there exists a neighborhood $B(\theta)$ (with positive radius $\Bar{\epsilon}>0$) such that $B(\theta)\subset CS$. Since for every $n$, $E_n(\theta)\notin CS_n$, in the limit, it follows that $\underset{n\to\infty}{\lim}E_n(\theta)\notin CS$, thus the nearest exterior point will be at least $\Bar{\epsilon}$ away from $\theta$. We can simply set $M=\Bar{\epsilon}$. \\
Moreover, by Assumption \ref{ass:grid} and Lemma \ref{lemma3}, the grid will grow in a way that gradually fills the entire parameter space, i.e. there are grid points arbitrarily close to any point in the space. Then, it follows that 
$$||I_{n+1}-\theta||\leq ||I_n-\theta||$$
where 
$$I_n\equiv\underset{I\in S_n^{interior}}{\arg\min}\,||I-\theta||,$$ 
and $S_n^{interior}$ denotes those grid points of $S_n$ that are inside $CS$, because $I_n\in S_{n+1}$. We evaluate the growing grid using the same test given in $CS$ to avoid the issue of evolving boundaries, which can cause certain $\theta$ to be inside the $CS_n$ but outside $CS_{n'}$ for some $n\neq n'$. So, we have that
$$\lim_{n\to\infty}||I_n-\theta||=0.$$
\end{proof}

\subsection{Proof of Theorem \ref{thm1}}
\begin{proof}
The idea of this proof is to find a sufficient condition which implies the dominant influence of the nearest interior grid point over all exterior grid points by the construction of the grid points. Here we will assume that $\theta$ lies exactly on the boundary of $CS$ w.p.0, so we can focus on the case where $\theta\in\text{int}(CS)$. Since we want the following to be true 
$$\max_{i\in S_{n}^{interior}} K(i,\theta)>\sum_{j\in S_{n}^{exterior}}K(j,\theta),$$
where $S_n^{exterior}$ denotes those grid points of $S_n$ that are outside $CS$, if we can show the following sufficient condition, 
$$\max_{i\in S_{n}^{interior}} K(i,\theta)>|S_n^{exterior}|\cdot\max_{j\in S_{n}^{exterior}}K(j,\theta),$$
where $|\cdot|$ denotes the cardinality of the collection of grid points, we would establish that the influence of the nearest interior grid point dominates the combined influence of all exterior grid points, which then implies that our SVM classifier labels this point $\theta$ as 1. Equivalently, we would like to show the following condition holds true
$$\max_{i\in S_{n}^{interior}}\exp\left\{-\frac{||i-\theta||^2}{2\sigma_n^2}\right\} > |S_n^{exterior}|\cdot\max_{j\in S_{n}^{exterior}}\exp\left\{-\frac{||j-\theta||^2}{2\sigma_n^2}\right\}$$
$$\Leftrightarrow \exp\left\{-\frac{||I_n(\theta)-\theta||^2}{2\sigma_n^2}\right\}>|S_n^{exterior}|\cdot\exp\left\{-\frac{||E_n(\theta)-\theta||^2}{2\sigma_n^2}\right\},$$
which requires that 
$$\exp\left\{\frac{||E_n(\theta)-\theta||^2-||I_n(\theta)-\theta||^2}{2\sigma_n^2}\right\}>|S_n^{exterior}|,$$
or 
\begin{equation}\label{ineq:thm1}
0<2\sigma_n^2<\frac{||E_n(\theta)-\theta||^2-||I_n(\theta)-\theta||^2}{\log|S_n^{exterior}|}.
\end{equation}
By Lemma \ref{lemma2}, $\forall\theta\in \text{int}(CS)$, $\exists N\in\mathbb{N}$ such that $\forall n>N$, $||I_n(\theta)-\theta||^2<||E_n(\theta)-\theta||^2$, and hence such $\sigma_n^2$ exists that satisfies \ref{ineq:thm1}. Therefore, choosing the tuning parameter this way for each $n>N$ makes the influence of the nearest interior grid dominate the influence of all exterior grid points in the limit. Consequently, the SVM classifier trained over these grids under such tuning classifies any $\theta\in CS$ to have label 1; that is, $\{\theta\in CS\} \Rightarrow \{\theta\in SVM_+\}$. Because $\mathbb{P}(\theta_0\in CS)\geq 1-\alpha$, it follows that
$$\mathbb{P}(\theta_0\in SVM_+)\geq \mathbb{P}(\theta_0\in CS)\geq 1-\alpha$$
\end{proof}

\subsection{Proof of Corollary \ref{corollary1}}
\begin{proof}
The forward direction of this result is showed in the proof in the above section of Theorem \ref{thm1}. It only remains to show the backward direction; that is, if a point $\theta$ is classified by the SVM classifier to have label 1, then it must be in $CS$, too. We will show the contrapositive statement of the backward direction holds true. 

Take a point $\Tilde{\theta}\notin CS$, or $\theta\in\overline{CS}\equiv\Theta\setminus CS$, we want to show that we can find values of the tuning parameter $\sigma^2$ such that SVM classifier classifies $\Tilde{\theta}$ to have label $-1$. Following the same argument in the proof of Lemma \ref{lemma2}, as we expand the grid, the Euclidean distance from $\Tilde{\theta}$ to any interior grid point of $CS$ (with label $1$) is bounded away from 0; whereas the distance from $\Tilde{\theta}$ to the nearest exterior grid point (with label $-1$) decreases to $0$ in the limit. Notice that the interior and exterior grid points are defined in reference to $CS$, the same way as above. Mathematically, we have
\begin{enumerate}
    \item $\exists \Tilde{M}>0$ such that $||I_n(\Tilde{\theta})-\Tilde{\theta}||\geq \Tilde{M}>0$, for all $n$; and 
    \item $\underset{n\to\infty}{\lim}||E_n(\Tilde{\theta})-\Tilde{\theta}||\to 0$.
\end{enumerate}
Now, in order for SVM to classify $\Tilde{\theta}$ as $-1$, a sufficient condition is 
$$\max_{s_i\in S_n^{exterior}} K(s_i,\Tilde{\theta})>|S_+|\cdot \max_{s_j\in S_n^{interior}} K(s_j,\Tilde{\theta})$$
$$\Leftrightarrow \max_{s_i\in S_n^{exterior}} \exp\left(-\frac{||s_i-\Tilde{\theta}||^2}{2\sigma_n^2}\right)>|S_n^{interior}|\cdot \max_{s_j\in S_n^{interior}} \exp\left(-\frac{||s_j-\Tilde{\theta}||^2}{2\sigma_n^2}\right)$$
Solving the above inequality yields
\begin{equation}\label{ineq:corollary1}
0<2\sigma_n^2<\frac{||I_n(\Tilde{\theta})-\Tilde{\theta}||^2-||E_n(\Tilde{\theta})-\Tilde{\theta}||^2}{\log|S_n^{interior}|}
\end{equation}
Since for large enough sample size $n$, $||I_n(\Tilde{\theta})-\Tilde{\theta}||^2>||E_n(\Tilde{\theta})-\Tilde{\theta}||^2$, the chain of inequalities in \ref{ineq:corollary1} is well-defined. Therefore, we can find values for the tuning parameter $\sigma_n^2$ for all $n$'s that are sufficiently large such that a point $\Tilde{\theta}$ in the complement of $CS$ will be labelled $-1$ by the SVM classifier. \\
Combining this with Theorem \ref{thm1}, it follows that when the sample size is large enough, by taking the values in the intersection of \ref{ineq:thm1} and \ref{ineq:corollary1} for the tuning parameter $\sigma^2$, we have that a point 
$\theta\in\Theta$ belongs to $CS$ if and only if the SVM classifier under such tuning classifies it to have label $1$.
\end{proof}

\end{appendices}


\end{document}